\newtheorem{theorem}{Proposition}
\newtheorem{lemma}{Lemma}[theorem]
\newtheorem{definition}{Definition}
\def\@email#1#2{%
 \endgroup
 \patchcmd{\titleblock@produce}
  {\frontmatter@RRAPformat}
  {\frontmatter@RRAPformat{\produce@RRAP{*#1\href{mailto:#2}{#2}}}\frontmatter@RRAPformat}
  {}{}
}%
\begin{document}
\preprint{AIP/123-QED}

\title[Diffusion as a Signature of Chaos]{Diffusion as a Signature of Chaos}
\author{Nachiket Karve}
\email{nachiket@bu.edu}
\author{Nathan Rose}
\author{David Campbell}
\affiliation{Department of Physics, Boston University, Boston, Massachusetts 02215, USA}

\date{\today}

\begin{abstract}
    While classical chaos is defined via a system's sensitive dependence on its initial conditions (SDIC), this notion does not directly extend to quantum systems. Instead, recent works have established defining both quantum and classical chaos via the sensitivity to adiabatic deformations and measuring this sensitivity using the adiabatic gauge potential (AGP). Building on this formalism, we introduce the ``observable drift" as a probe of chaos in generic, non-Hamiltonian, classical systems. We show that this probe correctly characterizes classical systems that exhibit late-time SDIC as chaotic. Moreover, this characterization is consistent with the measure-theoretic definition of chaos via weak mixing. Thus, we show that these two notions of sensitivity (to changes in initial conditions and to adiabatic deformations) can be probed using the same quantity, and therefore, are equivalent definitions of chaos. Numerical examples are provided via the tent map, the logistic map, and the Chirikov standard map.
\end{abstract}

\maketitle

\begin{quotation}
In 1972, the mathematician Ed Lorenz coined the popular term ``butterfly effect” to describe how tiny perturbations in a system’s initial conditions, such as the fluttering of a butterfly’s wings, could ultimately lead to large-scale outcomes like a tornado \cite{lorenz_1972}. This property of a system being sensitive to tiny perturbations has traditionally been used to define classical chaos. However, this notion of chaos does not manifest directly in quantum systems, since they do not exhibit such sensitivity. Instead, a quantum system is considered chaotic when its classical counterpart is chaotic \cite{berry_1989}. In an attempt to unify our understanding of both quantum and classical chaos, recent works have proposed an alternative definition of chaos via the sensitivity to adiabatic deformations \cite{pandey_2020,lim_2024}. According to this view, a system (either quantum or classical) is chaotic when a tiny change to its Hamiltonian brings about large changes in its stationary states. In this work, we show that this new definition of chaos is essentially equivalent to our traditional understanding of chaos. To demonstrate this, we introduce the \textit{observable drift} as a probe of chaos in generic dynamical systems, which is a generalization of the probe used to measure sensitivity to adiabatic deformations in Hamiltonian systems. We then demonstrate that this probe accurately classifies classical systems exhibiting irreversible dynamics as chaotic. The underlying reason for this equivalence is that both sensitivities – to initial conditions and to adiabatic deformations – manifest as decaying correlations between observables, thereby linking two seemingly distinct characterizations of chaos under a common framework.
\end{quotation}

\section{Introduction}

\begin{figure}
    \centering
    \includegraphics[]{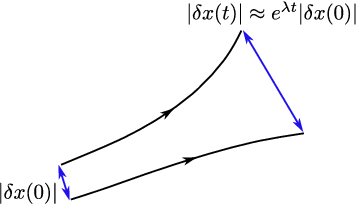}
    \caption{Two nearby trajectories of a chaotic system separating away from each other exponentially fast.}
    \label{fig_lyapunov}
\end{figure}

%\nk{Lyapunov exponents as the traditional probe of chaos:} 
Sensitive dependence on the initial conditions (SDIC) \cite{guckenheimer_1979,glasner_1993} is a defining feature of chaos in classical dynamical systems. A system exhibits SDIC if a small change in its initial conditions leads to a dramatically different outcome in the future. One of the most common ways to measure this sensitivity is via the maximal Lyapunov exponent $\lambda$ \cite{mawhin_2005,oseledets_1968}. This exponent is defined as
\begin{equation}
    \lambda = \lim_{t\to\infty} \ln \left|\frac{\delta x(t)}{\delta x(0)}\right|,
\end{equation}
where $\delta x$ denotes the separation between two nearby trajectories of the system in its phase-space (see Fig. \ref{fig_lyapunov}). A positive value indicates exponentially fast separation of nearby trajectories, and such systems are considered to be \textit{chaotic in the Lyapunov sense}. Additionally, chaotic systems become unpredictable beyond the Lyapunov time, which is inversely proportional to the Lyapunov exponent, $t_{\mathrm{Lyapunov}} = 1/\lambda$ \cite{vallejo_2017}. Systems with multi-dimensional phase spaces are characterized by a spectrum of Lyapunov exponents, where the largest exponent is usually used as an indicator of chaos time-scales \cite{benettin_1980}.

%\nk{Why Lyapunov exponents only capture short-time instability:} 
It is important to note that the Lyapunov exponent only captures the growth rate of finite perturbations at short times. Consider two states of a chaotic system that are close to each other initially. As both states evolve with time, the separation between them grows as $|\delta x(t)| \approx e^{\lambda t}|\delta x(0)|$. However, this expression is valid only when $|\delta x(t)|$ is small, and as the separation grows large at later times, nonlinear effects begin to take effect. Thus, the Lyapunov exponent can sometimes fail to capture late-time dynamics. An example of this failure is demonstrated in Fig. \ref{fig_tent}, where two trajectories of the symmetric tent map \cite{crampin_1994} (defined and discussed in more detail in later sections) are studied. The two states initially start close to each other, and the Lyapunov exponent correctly describes how the separation grows over short times. However, this growth does not continue indefinitely, and the separation alternates between periods of growth and decay. In fact, we observe that the two states remain close to each other even at times much longer than the Lyapunov time. Unsurprisingly, persistent periodic correlations are observed in this regime of the tent map \cite{shigematsu_1983,yoshida_1983,dorfle_1985}. Another system in which the Lyapunov exponent fails to capture non-linear effects is the system studied by Perron \cite{perron_1930,leonov_2003}. Perron demonstrates that it is possible to construct systems with positive Lyapunov exponents, where nearby trajectories do not diverge from each other, and systems with negative Lyapunov exponents, where nearby trajectories do diverge. 

%\nk{Chaos can be observable dependent:} 
Moreover, the Lyapunov exponent does not indicate how chaos is manifested in different observables of the system. In many-body systems, different observables can show different late-time and transient behavior. Consider the inner solar system, for example, where the Lyapunov time is estimated to be about $5$ Myrs \cite{mogavero_2023}. However, instabilities at this time-scale are only observed in the angle variables of the system, while the shapes and relative positions of the planetary orbits stay stable for far longer \cite{ito_2002,hayes_2007}. In fact, the Lyapunov time is much shorter than the estimated age of the solar system ($\sim 5$ Gyrs \cite{bouvier_2010}), as well as the dynamical half-life of Mercury ($\sim30-40$ Gyrs \cite{mogavero_2021}). Thus, different observables can become unstable at different times, and therefore, defining a ``chaos time-scale'' is inherently observable dependent. Such behavior has also been observed in the Fermi-Pasta-Ulam-Tsingou (FPUT) system \cite{fermi_1955}, where the energy variables exhibit quasi-periodic behavior far beyond the Lyapunov time \cite{benettin_2008,benettin_2018,kevin_2023,lando_2025,karve_2025}. This system is characterized by the formation of a non-thermalizing, metastable state \cite{zabusky_1965, flach_2005, zabusky_2005, flach_2006, pace_2019, karve_2024, deng_2025}. The KAM theorem \cite{kolmogorov_1979, arnold_1963} asserts that this behavior is fairly generic: that is, for systems with sufficiently weak integrability-breaking, the action variables can appear non-thermalizing almost indefinitely for a large fraction of initial conditions.

\begin{figure}
    \centering
    \includegraphics[width=0.95\linewidth]{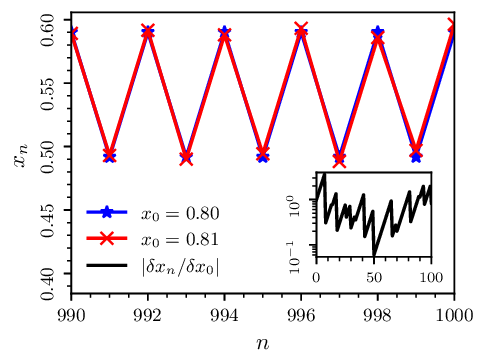}
    \caption{Two trajectories in a chaotic tent map (non-linearity $r=1.2$) with nearby initial conditions $x_0 = 0.8$, and $x_0 = 0.81$. Although the Lyapunov time in this regime is $t_{\text{Lyapunov}} \sim 5.48$, the trajectories remain close indefinitely. Inset: The separation between the two trajectories, $|\delta x_n/\delta x_0|$, as a function of time, $n$. The separation alternates between periods of exponential growth and decay. The growth rate is specified by the Lyapunov exponent.}
    \label{fig_tent}
\end{figure}

%\nk{Viewing chaos through mixing:} 
Consequently, we will adopt a statistical view of classical chaos through the concept of \textit{weak mixing} \cite{walters_1982,sinai_1982,viana_2016}. Instead of defining chaos through the behavior of individual trajectories, we define chaos through the behavior of ensembles of the system. Generic localized ensembles of a mixing system eventually thermalize and spread over the accessible phase-space. This implies that any two states, no matter how close they are to each other initially, are eventually separated from each other in the thermal ensemble. Therefore, trajectories in mixing systems exhibit SDIC at late times \cite{he_2004,werndl_2009}. Correlations between observables are known to decay in mixing systems; thus, how this chaos manifests in a particular observable can be quantified through the rate of decay of its autocorrelation function. We will refer to mixing systems as being \textit{chaotic in the mixing sense}.

%\nk{Quantum chaos}: 
The trajectory-based view of classical chaos does not translate well to quantum systems. There is no direct analogue of SDIC in quantum mechanics due to a lack of phase-space trajectories. Therefore, a quantum system is considered to be chaotic only when its semiclassical limit is chaotic and exhibits SDIC \cite{berry_1989,jensen_1992,maldacena_2016,andreev_1996}. Traditional probes of quantum chaos have focused on the spectral properties of the system, such as the level spacing statistics \cite{wigner_1957,bgs_1984,berry_1977,srednicki_1999,d'alessio_2016}. However, more recent studies have established that quantum chaos can be defined via the system's sensitivity to adiabatic changes \cite{pandey_2020, leblond_2021, pozsgay_2024, kolodrubetz_2017, vidmar_2025, orlov_2023, sharipov_2024, peres_1984}. Furthermore, this sensitivity is probed through the generator of adiabatic deformations, called the adiabatic gauge potential (AGP). Note that this sensitivity is necessarily observable dependent, since it depends on which parameter of the system is being changed. This formalism is also well-defined in classical Hamiltonian systems \cite{lim_2024,bermudez_2024}, and in \cite{karve_2025} it was shown that this probe is related to the diffusion of time-integrated perturbations. We will refer to such quantum and classical systems which are sensitive to adiabatic deformations as being \textit{chaotic in the adiabatic sense}.

%\nk{Reconciling these two definitions:} 
Nevertheless, in this work, we demonstrate how these two notions of chaos -- in the mixing sense and in the adiabatic sense -- can be reconciled. We introduce the ``observable drift'', a generalization of the formalism developed in \cite{karve_2025}, which allows us to probe SDIC in classical systems as well as the sensitivity to adiabatic deformations in Hamiltonian systems. We also show that this probe allows us to classify both transient and asymptotic dynamics of the system into four distinct classes, namely, dissipative, regular, strongly chaotic, and weakly chaotic. Regular and dissipative dynamics are characterized by the absence of diffusion in the time-integrated observable, whereas strong and weak chaos correspond to normal and anomalous diffusion, respectively.  We show that absolutely decaying correlations, a feature of weak mixing, cause this diffusion and that the rate of decay determines the strength of chaos. Simple examples of discrete-time maps are provided to demonstrate the efficacy of this method.

%\nk{Organization:} 
The remainder of this paper is organized as follows. Sec. \ref{sec_weakMixing} reviews some essential concepts from ergodic theory and discusses mixing systems. Sec. \ref{sec_agp} discusses defining chaos via the sensitivity to adiabatic deformations and how it is probed via the AGP. These two perspectives on chaos are united under a common framework of the observable drift, as discussed in Sec. \ref{sec_drift}. Sec. \ref{sec_numerics} demonstrates how this method can be applied to some very simple systems. We summarize our results and conclude in Sec. \ref{sec_conclusions}. Appendices \ref{app_ergodicTheory}, \ref{app_attractors}, \ref{app_ergodic}, \ref{app_01Test}, and \ref{app_numerics} discuss formal mathematical arguments.

\section{Chaos in the Mixing Sense}
\label{sec_weakMixing}

\begin{figure*}
    \centering
    \begin{subfigure}[t]{0.65\linewidth}
        \centering
        \includegraphics[]{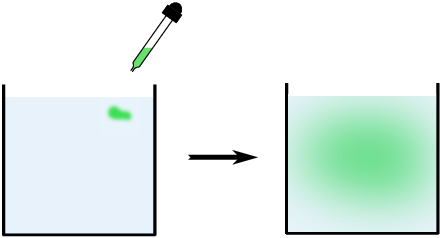}
        \caption{}
        \label{fig_mixingInk}
    \end{subfigure}
    \begin{subfigure}[t]{0.31\linewidth}
        \centering
        \includegraphics[]{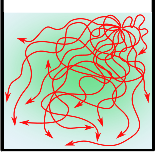}
        \caption{}
        \label{fig_mixingInkTraj}
    \end{subfigure}
    \begin{subfigure}[t]{0.65\linewidth}
        \centering
        \includegraphics[]{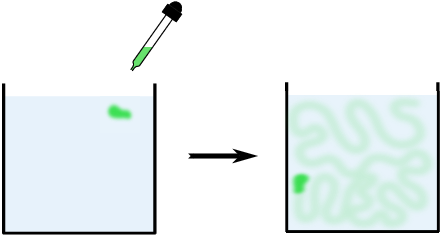}
        \caption{}
        \label{fig_ergodicButNotMixing}
    \end{subfigure}
    \begin{subfigure}[t]{0.31\linewidth}
        \centering
        \includegraphics[]{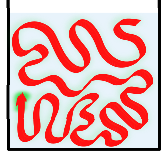}
        \caption{}
        \label{fig_ergodicButNotMixingTraj}
    \end{subfigure}
    \caption{(a) Evolution of an ink drop in water. While the drop is initially a highly localized object, it eventually spreads out over the entire liquid. (b) Trajectories of different particles in the drop, illustrated in red. Initially, all of them start close to each other, but eventually they spread over large distances. (c) A hypothetical ink drop that is ergodic but not mixing. The drop stays localized indefinitely, although it explores the entire liquid. (d) Trajectories of all ink particles in this drop always remain close to each other.}
\end{figure*}

The statistical viewpoint of chaos shifts our focus from the divergence of individual trajectories to the evolution of probability distributions over the phase space. Instead of asking how two nearby initial points separate with time, we ask how an ensemble of initial conditions spreads and mixes. In this section, we introduce concepts from ergodic theory that underlie this statistical description, namely weak and strong mixing, and discuss how they characterize long-time chaos. A more detailed and formal mathematical treatment of these concepts is provided in Appendix \ref{app_ergodicTheory}. Although we will primarily focus on discrete-time systems, that is, systems whose evolution is described by a map $x_{n+1} = f(x_n)$, all arguments can easily be extended to continuous-time systems.

Starting with an ensemble of states of the system, we represent this ensemble after $n$ iterations of the map by the probability distribution $\rho_n(x)$. The evolution of $\rho_n$ is governed by the Frobenius-Perron operator, given by
\begin{equation}
    \rho_{n+1}(x) = \mathcal{L}\rho_n(x) = \int dy \ \rho_n(y) \delta(x - f(y)).
\end{equation}
Oftentimes, the time-average of $\rho_n$ converges to a stationary distribution, which we call the ``natural'' distribution. It is given by
\begin{equation}
    \bar{\rho}(x) = \lim_{N\to\infty} \frac{1}{N+1} \sum_{n=0}^N \rho_n(x).
\end{equation}
One can easily check that the natural distribution is stationary, since $\bar{\rho} = \mathcal{L}\bar{\rho}$. Properties such as ergodicity and mixing are defined with respect to a particular natural distribution, and as such, the same system can exhibit different behaviors when different distributions are considered.

A system is considered to be ergodic with respect to a certain natural distribution if a generic state explores the entire phase-space, where the probability of visiting any region is weighted by the distribution. This results in time-averages of typical observables being equal to ensemble averages. That is, for an observable $O$,
\begin{equation}
    \lim_{N\to\infty} \frac{1}{N+1} \sum_{n=0}^N O(x_n) = \int dx \ \bar{\rho}(x) O(x).
\end{equation}

Mixing, on the other hand, is a stronger property than ergodicity. While ergodicity ensures time-averages and ensemble averages coincide, it is still possible for the system to remember its initial state indefinitely. Mixing, however, is an irreversible process that ensures a loss of memory. One can understand mixing by drawing a simple analogy with a drop of ink spreading in water, as shown in Fig. \ref{fig_mixingInk}. Here, the phase space is represented by the water, and the ink particles represent states in the ensemble. In such a mixing system, we start with a highly localized ensemble that eventually spreads out over the entire phase space. Information about the initial state is completely lost since it is impossible to reverse engineer the initial ink drop. A crucial property of \textit{strongly mixing systems} is the loss of correlations between observables. That is,
\begin{equation}
    \lim_{n\to\infty} C_n(O_1, O_2) = 0,
    \label{eq_strongMixing}
\end{equation}
where the correlation function $C_n(O_1, O_2)$ between observables $O_1$ and $O_2$ is defined as
\begin{align}
    C_n(O_1, O_2) =& \int dx \ \bar{\rho}(x) O_1(f^n(x))O_2(x)\notag\\& - \left(\int dx \ \bar{\rho}(x)O_1(x)\right)\left(\int dx \ \bar{\rho}(x)O_2(x)\right).
\end{align}
The condition in Eq. \ref{eq_strongMixing} is slightly relaxed in \textit{weakly mixing systems}, where
\begin{equation}
    \lim_{N\to\infty} \frac{1}{N+1}\sum_{n=0}^N|C_n(O_1, O_2)| = 0.
\end{equation}
Note that strong mixing is a subset of weak mixing. We collectively refer to systems possessing either property as mixing systems.

Importantly, mixing systems exhibit SDIC at long times. This can be seen by considering the ink drop analogy again and tracing out trajectories of different ink particles, as shown in Fig. \ref{fig_mixingInkTraj}. Every trajectory starts close together in the drop, but spreads away from each other asymptotically. If the drop represents a small error in prescribing the initial condition of the system, then mixing implies that the error is eventually magnified to be as large as the entire system (see Appendix \ref{app_ergodicTheory} for a detailed discussion).

To better understand the difference between ergodicity and mixing, one can consider a hypothetical ink drop that is ergodic but not mixing, as shown in Fig. \ref{fig_ergodicButNotMixing}. Such a drop would stay localized indefinitely, while exploring the entire container. Unlike the mixing system, it does not spread throughout the liquid, thus remembering its past. If the trajectories of the ink particles in this drop are traced out (Fig. \ref{fig_ergodicButNotMixingTraj}), one sees that they always remain close to each other. Thus, such a system does not exhibit SDIC.

Thus, chaos in the mixing sense captures long-time statistical irreversibility of deterministic motion. It describes how ensembles of the system spread throughout the phase space, and how this leads to a loss of correlations between observables. As we will see in Sec. \ref{sec_drift}, the rate of decay of these correlations is related to the ``strength" of the chaos in the system. 

\section{Chaos in the Adiabatic Sense}
\label{sec_agp}
While classical chaos is defined in terms of SDIC, no equivalent formulation exists in quantum mechanics. Quantum systems exhibit unitary evolution; therefore, the difference between any two states is always preserved. To quantify sensitivity in such systems, one proposal is to instead examine how the state responds to changes in external parameters. A seminal idea by Pandey, \textit{et al.} established that quantum chaos can be defined and probed via the system's response to slow, adiabatic variations of its parameters. We give a brief description of this idea in this section. See \cite{pandey_2020,kolodrubetz_2017,lim_2024} for a more detailed exposition. 

Let us consider a system that depends on a parameter $\lambda$, and is described by the following Hamiltonian:
\begin{equation}
    H(\lambda) = H_0 + \lambda V.
\end{equation}
The question of the system's sensitivity to adiabatic deformations is equivalent to asking how its eigenstates change under slow changes in $\lambda$. This is captured by the adiabatic gauge potential (AGP) operator $\mathcal{A}_\lambda$, defined as
\begin{equation}
    \mathcal{A}_\lambda\ket{n(\lambda)} = i\hbar \partial_\lambda \ket{n(\lambda)},
\end{equation}
where $\ket{n(\lambda)}$ are the instantaneous eigenstates of the system. A ``large'' AGP norm $||\mathcal{A}_\lambda||^2$ is a sign that the system is sensitive to adiabatic deformations, and therefore, is chaotic in the adiabatic sense.

This formalism can be readily extended to classical systems as shown in \cite{lim_2024,karve_2025}. In classical systems, the AGP becomes a function of the phase-space coordinates under a Wigner-Weyl transformation. One can then track the fluctuations of the AGP along any particular trajectory through
\begin{equation}
    \mathcal{A}_\lambda(t) = \mathcal{A}_\lambda(0) - \int_0^t d\tau \ \left(V(\tau) - \bar{V} \right).
\end{equation}
These fluctuations are divergent when the system is chaotic. In \cite{karve_2025}, it was shown that the fluctuations of the AGP along trajectories in the phase-space are related to the autocorrelation function of the perturbation $V$. Furthermore, a decaying autocorrelation of $V$ was demonstrated to result in diverging AGP fluctuations. Thus, chaos in the mixing sense implies chaos in the adiabatic sense. Note that this probe of chaos is observable dependent: the AGP can exhibit different behaviors for different $V$s.

This connection between decaying correlations and divergent AGP fluctuations provides a quantitative bridge between the statistical and adiabatic descriptions of chaos. However, it is essential to note that the adiabatic interpretation of chaos is only meaningful in Hamiltonian systems, where the notion of adiabatic variation is well-defined. To extend these ideas to more general, non-Hamiltonian dynamics, we introduce in the next section a unified formalism that applies to generic dynamical systems.

\section{Observable Drift}
\label{sec_drift}

The AGP framework provides a powerful measure of sensitivity for Hamiltonian systems, but it cannot be extended to non-Hamiltonian dynamics. To formulate a universal diagnostic of chaos, we instead focus on the behavior of time-integrated observables. Consider some observable $O$ in a discrete-time dynamical system, and suppose that $O$ has a finite long-time average on any trajectory in the phase space:
\begin{equation}
    \bar{O}(x) = \lim_{N\to\infty}\frac{1}{N+1}\sum_{n=0}^N O(f^n(x)).
\end{equation}
This time average is trajectory dependent in a generic system, but is equal to the phase-space average for almost every trajectory in an ergodic system. Taking inspiration from the AGP formalism, we construct a random walk for this observable, by considering the cumulative displacement of $O$ from its mean along a trajectory,
\begin{equation}
    \Delta_N(x) = \sum_{n=0}^N \ \left[O(f^n(x)) - \bar{O}(x)\right].
\end{equation}
We refer to this quantity as the ``observable drift'', or $O$-drift for short.
 
It is easy to see that the evolution of the drift depends on the amount of correlation between different points on the trajectory. In a non-chaotic system, where the motion is predictable, the $O$-drift is typically a regular, smooth function. However, in a highly chaotic system, the values that the observable takes at each point on the trajectory can be completely random and independent of each other. See Fig. \ref{fig_driftSketch} for a comparison between chaotic and non-chaotic observable drifts. This stark contrast enables us to distinguish between chaotic and regular trajectories.

\begin{figure}
    \centering
    \includegraphics[width=0.97\linewidth]{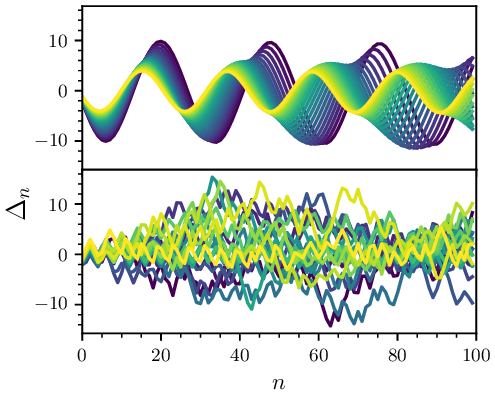}
    \caption{Sketch of the observable drift on an ensemble of non-chaotic (top) and chaotic (bottom) trajectories.}
    \label{fig_driftSketch}
\end{figure}

A central object in the study of random walks is the mean squared displacement, which measures the spread of trajectories in a given ensemble over time \cite{ibe_2013}. In a similar fashion, we study the ensemble-averaged drift variance, $\langle\sigma_N^2\rangle$, where
\begin{equation}
    \sigma_N^2(x) = \frac{1}{N+1}\sum_{n=0}^N \Delta_n^2(x) - \left(\frac{1}{N+1}\sum_{n=0}^N \Delta_n(x)\right)^2,
    \label{eq_driftVar}
\end{equation}
and
\begin{equation}
    \langle \sigma_N^2\rangle = \int dx \ \rho_0(x) \ \sigma_N^2(x)
\end{equation}
denotes an average over a distribution of initial conditions given by the probability density $\rho_0$. This is a direct generalization of the fidelity susceptibility in Hamiltonian systems \cite{lim_2024,karve_2025,delcampo_2012,delcampo_2017}. 

We then define a diffusion exponent $\gamma(N)$ as
\begin{equation}
    \gamma(N) = \frac{\ln \langle \sigma_N^2 \rangle}{\ln N}.
    \label{eq_meanDriftVar}
\end{equation}
This diffusion exponent serves a two-fold purpose. First, its value at finite times captures the behavior of the transient ensemble density $\bar{\rho}_N$, defined as
\begin{equation}
    \bar{\rho}_N(x) = \frac{1}{N+1}\sum_{n=0}^N \rho_n(x).
    \label{eq_avgMeasure}
\end{equation}
Depending on the value of the exponent, this transient dynamics can be categorized into four distinct regimes: dissipative, regular, strongly chaotic, and weakly chaotic. We verify this claim in Sec. \ref{sec_numerics} through numerical examples. These regimes are summarized in Table \ref{tab_class}.

Secondly, as $\bar{\rho}_N$ converges to the natural distribution $\bar{\rho}$, the asymptotic value of the exponent, $\gamma(\infty)$, tells us about the properties of $\bar{\rho}$. Again, this asymptotic value allows us to classify the behavior of the natural distribution as belonging to either one of the regimes listed in Table \ref{tab_class}. 

Thus, the diffusion exponent captures both the transient dynamics of the specific ensemble under study, as well as the dynamics of the equilibrium distribution. In systems with long transient dynamics, a clear switch from transient values of the exponent to equilibrium values can be used to define a crossover time. This time serves as a diagnostic for the equilibration of the ensemble. Importantly, the Lyapunov exponent is agnostic to the initial distribution and fails to capture such transient behavior.

\begin{table}
\begin{comment}    
\begin{ruledtabular}
    \begin{tabular}{lcr}
        \multirow{2}{*}{Purely non-mixing} & $\gamma < 0$ & Dissipative \\ %\hline
        & $\gamma = 0$ & Regular \\ %\hline
        \midrule
        \multirow{2}{*}{
            %\begin{tabular}{l}Weakly or partially \\ mixing\end{tabular}
            Weakly mixing} & $\gamma = 1$ & Strong Chaos \\ %\hline 
        & $1 < \gamma \leq 2$ & Weak Chaos \\ %\hline  
    \end{tabular}
\end{ruledtabular}
\end{comment}
\begin{ruledtabular}
    \begin{tabular}{lr}
        $\gamma < 0$ & Dissipative \\ %\hline
        $\gamma = 0$ & Regular \\ %\hline
        $\gamma = 1$ & Strong Chaos \\ %\hline 
        $1 < \gamma \leq 2$ & Weak Chaos \\ %\hline  
    \end{tabular}
\end{ruledtabular}

\caption{The diffusion exponent $\gamma$ for dissipative, regular, strongly chaotic, and weakly chaotic regimes.}
\label{tab_class}    
\end{table}

We now provide an intuitive understanding of the regimes described in Table \ref{tab_class} here. See Appendices \ref{app_attractors} and \ref{app_ergodic} for a rigorous analysis. Let us first consider the dissipative regime, where the diffusion exponent is negative, implying a decaying drift variance. This behavior is observed in systems with attractors (both chaotic and non-chaotic). A sufficiently wide initial distribution in such a system goes through an initial contracting phase as it becomes more localized. In Appendix \ref{app_attractors}, we demonstrate that this initial decay is $\langle\sigma_N^2\rangle \sim \mathcal{O}(1/N)$ in the presence of a fixed point or limit-cycle attractor.

The regular regime is characterized by systems that exhibit periodic or quasi-periodic motion. In such systems, we find the drift variance to saturate at a finite value. For example, although a system with a limit-cycle attractor can initially be dissipative, it exhibits periodic motion asymptotically. Thus, the behavior of the natural distribution in such a system is regular, as shown in Appendix \ref{app_limitCycle}. Similarly, ergodic but not mixing systems exhibit quasi-periodic motion and are therefore also classified as regular systems (see Appendix \ref{app_ergodicNotMixing}). 

On the other hand, decaying correlations in chaotic systems lead to diffusion of the drift (Appendix \ref{app_weakMixing}). If the decay of correlations on the natural measure is sufficiently fast, i.e. $|C_n| < \mathcal{O}(1/n)$, then each step in the random walk becomes almost independent, and the mean drift variance shows linear growth \cite{gyu_2017}. We characterize this normal diffusion of the drift as strong chaos. On the other hand, if the decay of correlations is slow, i.e. $|C_n| > \mathcal{O}(1/n)$, then super-linear growth (anomalous diffusion) of the variance is observed, which we characterize as weak chaos. In an ergodic ensemble, the relation between the decay of the autocorrelation of $O$ and the growth of the drift variance is captured by the following identity:
\begin{equation}
    \langle\sigma^2_N\rangle = \sum_{r = -N}^N \frac{(N-|r|)(N+1-|r|)(N+2-|r|)}{6(N+1)^2} C_r(O,O).
    \label{eq_flucDissText}
\end{equation}
See Appendix \ref{app_ergodic} for a rigorous derivation.

It is important to stress here that weak and strong chaos are not analogous to weak and strong mixing. In fact, any weakly (and consequently, any strongly) mixing system can exhibit either weak or strong chaos. We also note that there are a number of different names used for these regimes in the literature. For example, the weakly chaotic regime is also known as the ``nearly-integrable'', ``weak integrability breaking'', or ``maximally sensitive'' regime \cite{pandey_2020,pozsgay_2024,lim_2024,vanovac_2024,vidmar_2025,surace_2023}.

This framework readily extends to continuous-time systems, where the drift and its variance are defined by replacing the sum over $n$ with an integral over time:
\begin{subequations}
    \begin{equation}
        \Delta(t) = \int_0^t d\tau \ \left[O(\vec{x}(\tau)) - \bar{O}\right],
    \end{equation}
    \begin{equation}
        \sigma^2(t) = \frac{1}{t}\int_0^t d\tau \ \Delta(\tau)^2 - \left(\frac{1}{t}\int_0^t d\tau \ \Delta(\tau)\right)^2.
    \end{equation}
\end{subequations}
As noted in \cite{karve_2025}, this probe is inherently observable dependent. This can prove advantageous, as it allows us to determine the observer-specific behavior.

It should also be noted that this framework is conceptually very similar to the 0-1 test of chaos, introduced by Gottwald, \textit{et al} \cite{gottwald_2004,gottwald_2005,gottwald_2009}. This binary test distinguishes between regular and chaotic dynamics in a time-series $\phi(n)$ by coupling it to a two-dimensional system via
\begin{subequations}
    \begin{equation}
        p(n+1) = p(n) + \phi(n)\cos cn,
    \end{equation}
    \begin{equation}
        q(n+1) = q(n) + \phi(n)\sin cn,
    \end{equation}
\end{subequations}
and analyzing the nature of diffusion in this driven system. Here $c$ is some positive constant in $(0,2\pi)$. In this construction, the mean squared displacement of the vector $(q(n),p(n))$ is calculated, and the authors demonstrate that the corresponding diffusion exponent is $0$ for regular systems and $1$ for chaotic systems. The conceptual similarities between the 0-1 test and our method are obvious, and therefore, we do not expect these two methods to have major disagreements. However, the 0-1 test requires a more complicated construction, since it involves driving an external system. Further, this is a binary test and does not distinguish between strong and weak chaos. It is also not clear whether information about the transient dynamics of the system can be extracted via the 0-1 test. See Appendix \ref{app_01Test} for more details.

\section{Numerical Examples}
\label{sec_numerics}

\begin{figure*}
    \centering
    \begin{subfigure}[t]{0.47\linewidth}
        \centering
        \includegraphics[width=\linewidth]{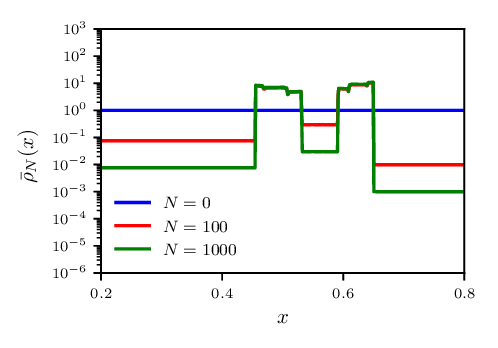}
        \caption{}
    \end{subfigure}
    \begin{subfigure}[t]{0.47\linewidth}
        \centering
        \includegraphics[width=\linewidth]{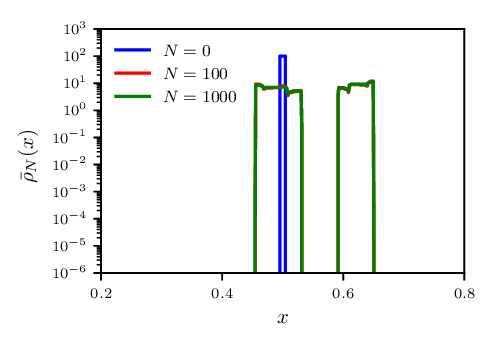}
        \caption{}
    \end{subfigure}
    \caption{Evolution of the time-averaged density of the tent map with $r=1.3$, for the uniform distribution (a), and a narrower initial distribution with width $w=0.01$ and centered at $x_0=0.5$ (b). Convergence to the invariant measure is faster in the latter case, since the initial distribution lies inside the strange attractor.}
    \label{fig_tentDensity}
\end{figure*}

We now provide a few simple examples of discrete maps where our framework can accurately distinguish between dissipative, regular, and chaotic behavior. We demonstrate how the mean drift variance can be used to identify crossovers between different stages of the dynamics. We focus on three different examples here: namely, the tent map \cite{crampin_1994}, the logistic map \cite{may_1976}, and the Chirikov standard map \cite{chirikov_1971,chirikov_1979}. Some finer numerical details are discussed in Appendix \ref{app_numerics}.

\subsection{Tent Map}

\begin{figure}
    \centering
    \includegraphics[width=0.97\linewidth]{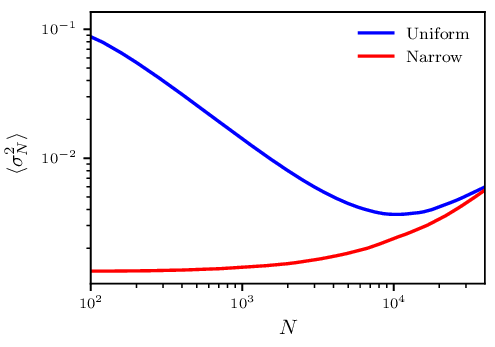}
    \caption{The mean $x$-drift variance for the tent map with $r=1.3$. The uniform initial distribution (blue) and a narrower initial distribution (red) centered at $x_0=0.5$ with width $w=0.01$ are compared. The former shows an initial decay of the drift variance, while the latter does not, since it does not go through a dissipative phase.}
    \label{fig_tentDriftVarComp}
\end{figure}

\begin{figure*}
    \centering
    \begin{subfigure}[t]{0.47\linewidth}
        \centering
        \includegraphics[width=\linewidth]{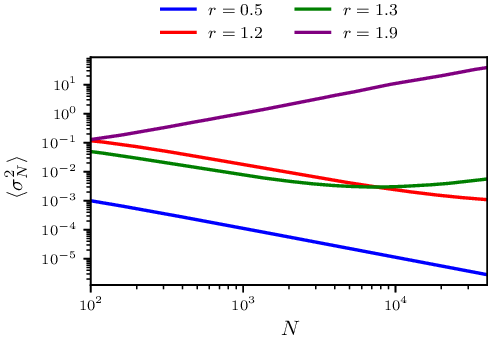}
        \caption{}
        \label{fig_MeanDriftVar}     
    \end{subfigure}
    \hspace{1em}
    \begin{subfigure}[t]{0.47\linewidth}
        \centering
        \includegraphics[width=\linewidth]{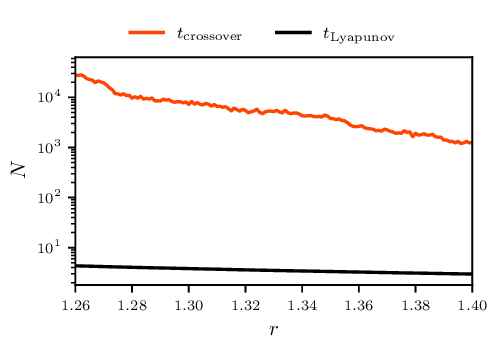}
        \caption{}
        \label{fig_mixingTime}
    \end{subfigure}
    \caption{(a) Mean $x$-drift variance with an initially uniform distribution as a function of $N$ for different values of $r$. (b) Comparison between crossover time for the uniform distribution and Lyapunov time for $1<r<\sqrt{2}$.}
\end{figure*}

The symmetric tent map \cite{crampin_1994} is defined on the interval $[0,1]$ and is given by the following piecewise linear function:
\begin{equation}
    f(x) = \begin{cases}
        r x, \ & x \leq \frac{1}{2},\\
        r (1 - x), \ & x > \frac{1}{2},
    \end{cases}
\end{equation}
where the coupling is restricted to $0 \leq r \leq 2$. This map has a fixed point attractor at $x^* = 0$ when $r < 1$, and is chaotic otherwise, with a Lyapunov time given by $t_{\text{Lyapunov}} = 1/\ln r$. In this chaotic regime, the phase space is characterized by a strange attractor with a band structure, which becomes narrower as $r$ approaches $1$ from above. Thus, we expect wide distributions to go through initial dissipative phases as they converge to the strange attractor. Such transient behavior is not captured by the Lyapunov time, since it does not depend on the initial distribution.

We now demonstrate how different initial distributions have different convergence properties, leading to different behaviors of the mean $x$-drift variance. In particular, we consider two different initial distributions, one wide and one narrow. Our first distribution is the uniform distribution over $[0,1]$, defined as
\begin{equation}
    \rho_{\mathrm{uniform}}(x) = \begin{cases}
        1, & 0 \leq x \leq 1,\\
        0, & \text{otherwise}.
        \end{cases}   
    \label{eq_uniform}
\end{equation}
For the second, narrower distribution, we consider one that is centered around a point $x_0$ with a width $w$, given by
\begin{equation}
    \rho(x; x_0,w) = \begin{cases}
        \frac{1}{w}, & |x-x_0| < w/2,\\
        0, & \text{otherwise}.
    \end{cases}
\end{equation}

In Fig. \ref{fig_tentDensity}, we compare the evolution of the time-averaged distribution $\bar{\rho}_N(x)$ (given by Eq. \ref{eq_avgMeasure}) for both initial distributions at $r=1.3$.
Note that at $r=1.3$, the strange attractor is restricted to two narrow bands in the phase space. The wider uniform distribution goes through an initial dissipative phase as it converges to the natural density. On the other hand, if we choose the parameters of the narrower distribution such that it lies completely inside the strange attractor, then its convergence is much faster, and there is no dissipative phase.

This transient dynamics is reflected in the behavior of the mean $x$-drift variance, plotted in Fig. \ref{fig_tentDriftVarComp} for both initial distributions. The variance of the uniform distribution shows an initial decay, but later growth. On the other hand, the narrower distribution has no such initial decay, and shows monotonic growth. At large times, both mean drift variances converge to the same asymptotic behavior, since both systems converge to the same natural distribution.

As discussed previously, the transition from decay to growth of the drift variance can be used to define a crossover time, which serves as a diagnostic for equilibration. This crossover is only observed when the strange attractor occupies a narrow portion of the phase space, which forces a wide initial distribution to go through a dissipative phase. See Fig. \ref{fig_MeanDriftVar}, which plots the mean $x$-drift variance for the uniform distribution at different values of $r$. As expected, when $r < 1$, a pure dissipative phase is observed, with the decay of the variance scaling as $1/N$. When $r=1.9$, the strange attractor occupies a significant portion of the phase space, and the system is immediately strongly chaotic. Hence, the variance exhibits linear growth. However, a significant dissipative phase and an eventual transition to chaos is observed when $1 < r < \sqrt{2}$. This crossover time is orders of magnitude larger that the Lyapunov time, as shown in Fig. \ref{fig_mixingTime}. 
%\dkc{This figure reference needs to be corrected.}
Note that for $r=1.2$, which is the non-linearity studied in Fig. \ref{fig_tent}, the crossover time is beyond our computation time.

\subsection{Logistic Map}

\begin{figure*}
    \centering
    \begin{subfigure}[t]{0.45\textwidth}
        \centering
        \includegraphics[width=\linewidth]{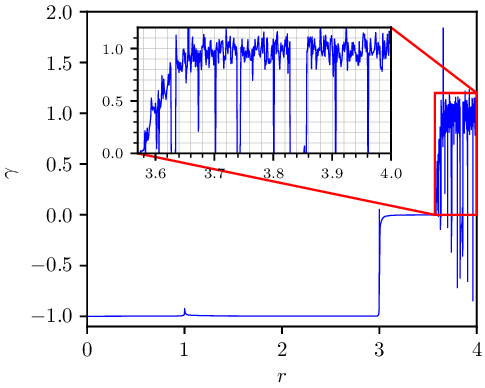}
        \caption{}
        \label{fig_diffExp}
    \end{subfigure}
    \hspace{1em}
    \begin{subfigure}[t]{0.45\textwidth}
        \centering
        \includegraphics[width=\linewidth]{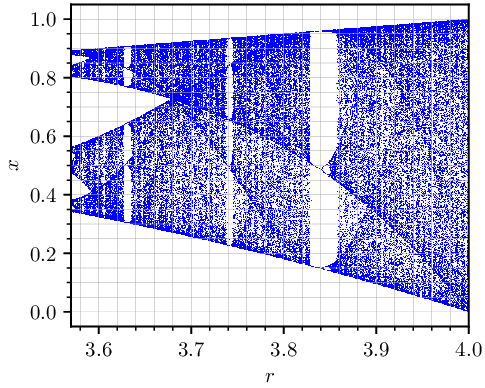}
        \caption{}
        \label{fig_bifurcation}
    \end{subfigure}
    \caption{(a) Diffusion exponent of the mean $x^2$-drift variance as a function of the non-linearity in the logistic map. (b) Bifurcation diagram for the chaotic region of the logistic map. The regions of stability can be identified by the gaps in the figure, which are all captured by the diffusion exponent.}
\end{figure*}

Next, we investigate the logistic map \cite{may_1976} in this section. The logistic map is a one-dimensional map defined on the interval $[0,1]$, and given by
\begin{equation}
    f(x) = rx(1-x),
\end{equation}
where the non-linearity is restricted to $0\leq r\leq 4$. For this example, we will consider our observable to be $O(x) = x^2$. We calculated and plot the asymptotic value of the diffusion exponent of the $x^2$-drift as a function of the non-linearity in Fig. \ref{fig_diffExp}. The mean $x^2$-drift variance is computed at each $r$ by averaging $\sigma_N^2$ over the uniform distribution (Eq. \ref{eq_uniform}), which is then fitted to a power law to extract $\gamma(\infty)$.  As expected, when $r<3$, the exponent is $-1$ due to the presence of a fixed point attractor. The logistic map contains limit cycle attractors when $3 < r < r_\infty \approx 3.56994\dots$, which leads to $\gamma$ being zero. Beyond $r_\infty$, the map exhibits chaos and the exponent $\gamma(\infty)$ is mostly one in this region. There are brief intervals of stability even within the chaotic region, such as the period-3 cycle that appears at $r = 1+\sqrt{8}$. These intervals are characterized by sharp drops in $\gamma(\infty)$ to and below zero. Some errors are expected in the computation of $\gamma(\infty)$ due to transient effects, but these diminish with increasing computation time. Comparing with the bifurcation diagram in Fig. \ref{fig_bifurcation}, we see that all regions of stability are accurately captured by the diffusion exponent. Similar results are obtained with the $0-1$ test of chaos, as shown in \cite{gottwald_2009}.

\subsection{Standard Map}
\label{sec_stdMap}

\begin{figure}
    \centering
    \includegraphics[width=0.95\linewidth]{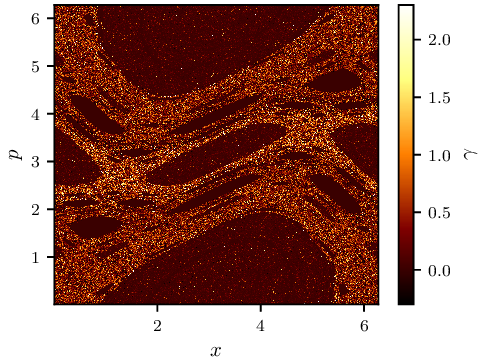}
    \caption{Phase-space of the Chirikov standard map at non-linearity $K=1$. The diffusion exponent $\gamma$ of the mean $xp$-drift variance is plotted as a function of $(x_0,p_0)$.}
    \label{fig_std}
\end{figure}

%\newcommand{\dkc}[1]{{\color{red}{[dkc]:\textit{#1}}}}
%\dkc{Figure is way too dark. Can we please make is lighter ?}

Finally, we study the Chirikov standard map \cite{chirikov_1971,chirikov_1979}, which is a two-dimensional map described by
\begin{align}
    &p_{n+1} = p_n + K\sin x_n \ &(\text{mod } 2\pi), \notag\\& x_{n+1} = x_n + p_{n+1} \ &(\text{mod } 2\pi).
\end{align}
This map displays a mixed phase space, featuring stable islands embedded within chaotic regions. For this map, we will consider the observable $O(x,p) = xp$, along with an initial distribution given by $\rho_0(x,p) = \delta(x-x_0)\delta(p-p_0)$. The evolution of the $xp$-drift variance on a particular ``sticky" trajectory is demonstrated in the supplementary video. Initially, the trajectory appears nearly periodic, indicating non-chaotic behavior. As it transitions into the chaotic region of the phase-space, the drift variance exhibits a sudden growth, signaling the onset of chaos.

In addition, we compute the asymptotic values of the diffusion exponent for each initial point $(x_0,p_0)$ and plot it as a function of the phase-space coordinates in Fig. \ref{fig_std}. Since the initial distribution is a delta function, we smooth out the drift variance by performing a time-average, given by
\begin{equation}
    \bar{\sigma}^2_N(x_0,p_0) = \frac{1}{N+1}\sum_{n=0}^N \sigma^2_n(x_0,p_0),
\end{equation}
and extract $\gamma$ by performing a linear fit on $\frac{\ln \bar{\sigma}^2_N(x_0,p_0)}{\ln N}$. Note that the mixed phase-space of the standard map is accurately reproduced, where regular regions are characterized by $\gamma=0$, while chaotic regions are characterized by $\gamma > 0$. Again, transient effects are expected to introduce some errors, but they diminish with increasing computation time.

\section{Conclusions}
\label{sec_conclusions}
In this paper, we have shown how the AGP probe for chaos can be extended and generalized to all classical dynamical systems. We first argue that weakly mixing systems are chaotic, since any errors in the initial conditions must eventually span the entire accessible phase-space of such systems. Thus, mixing captures long-time SDIC. This allows us to argue that chaotic systems can be identified by absolutely decaying correlations: that is
\begin{equation}
    \lim_{N\to\infty} \frac{1}{N+1}\sum_{n=0}^N |C_n(O_1,O_2)| = 0.
\end{equation}
By constructing the drift of an observable as a random walk, with each step given by the deviation of the observable from its mean, we show that decaying correlations result in indefinite growth of the drift variance, whereas quasi-periodic correlations result in its saturation. The strength of chaos is identified with the rate of decay of correlations: strong chaos corresponds to a fast decay, and therefore linear growth of the variance, while weak chaos corresponds to a slow decay and anomalous behavior. Dissipative dynamics is associated with a contraction of the phase-space density and is shown to correspond to a decreasing drift variance. We have also demonstrated the efficacy of this framework by providing numerical analysis of three discrete-time systems: namely, the tent map, logistic map, and the Chirikov standard map.

We list some open problems and unanswered questions that could be the subject of future work. While we have numerically studied some well-known one- and two-dimensional discrete-time systems, it remains to be seen how well this formalism works in more complex systems. Since the 0-1 test of chaos has been applied to much more complicated systems, a direct comparison with this test in a wide variety of systems would prove useful. Moreover, this method can be directly applied to quantum systems, making it a promising tool for studying both open and driven quantum dynamics. Additionally, while we have studied the second moment of the drift in this work, we expect higher moments to reveal finer details of the system. Specifically, it would be interesting to see if higher moments allow us to differentiate between different levels of the ergodic hierarchy.

Lastly, we hope that this work contributes to bridging the gap between quantum and classical chaos. As discussed previously, while classical chaos is defined as a sensitive dependence on the initial conditions, this notion cannot be directly translated to quantum systems. Instead, quantum chaos has been associated with the sensitivity to adiabatic deformations. In this paper, we show that these two definitions of chaos are equivalent. While the drift variance (or AGP norm) is known to probe sensitivity to adiabatic changes in both quantum \cite{pandey_2020} and classical systems \cite{lim_2024,karve_2025}, we show here that it also captures SDIC in classical dynamics. Since both forms of sensitivity ultimately stem from the decay of correlations, a unified definition of chaos in both quantum and classical systems may simply rest on the decay of two-point correlation functions.

\section{Acknowledgments}

The authors thank Anatoli Polkovnikov, Bernardo Barrera, and Guilherme Delfino for insightful discussions. The authors acknowledge the use of Boston University's Shared Computing Cluster (SCC) for numerical simulations.

\section{Supplementary Material}

This section describes the supplementary video. As discussed in Sec. \ref{sec_stdMap}, the video demonstrates the evolution of the $xp$-drift variance on a sticky trajectory. The system is initialized at the point $x_0 = 5.490807$ and $p_0 = 0$, with the non-linearity being $K = 1.05$. The trajectory is considered to be sticky since it initially shows near-periodic behavior, but eventually becomes chaotic. This transition from regular to chaotic dynamics is identified with a sudden growth in the drift variance.

\section{Conflict of Interest}

The authors have no conflicts to disclose.

\section{Data Availability}

The data that support the findings of this article are openly available \cite{karve_git}.

\appendix

\section{Ergodicity and Mixing}
\label{app_ergodicTheory}

In this section, we provide a brief overview of concepts from ergodic theory. More details can be found in standard textbooks \cite{walters_1982,sinai_1982,viana_2016}. We also demonstrate how mixing leads to late-time SDIC, and thus argue that such systems must be considered chaotic.

A discrete-time dynamical system is defined on a space $M$, with the map $f:M\to M$ defining time evolution $x_{n+1} = f(x_n)$ of states in $M$. The trajectory of any state $x\in M$ is given by the set $\{f^n(x) \mid n \geq 0\}$, where $f^n$ is the $n$th iterate of the map. We will also assume that $M$ is a metric space, and that the function $d(x,y)\geq 0$ defines a distance between any two states $x$ and $y$.

An ensemble of states of the system, described by a probability density $\rho_0(x)$, induces a probability measure $\rho_0$. More specifically, we will assume that the measure $\rho_0$ is defined on the measurable space $(M,\mathcal{B})$, where $\mathcal{B}$ is the Borel $\sigma$-algebra of $M$. The evolution of this measure with time can be described by
\begin{equation}
    \rho_n = \rho_0 \circ f^{-n}.
\end{equation}
Consequently, the natural measure, which we denote by $\mu$, is given by
\begin{equation}
    \mu = \lim_{N\to\infty} \frac{1}{N+1}\sum_{n=0}^N \rho_0 \circ f^{-n}.
\end{equation}

The property of ergodicity is defined with respect to a specific natural measure. A system is ergodic if a generic trajectory visits almost every point in the phase space with respect to the natural measure. In other words, the state space of an ergodic system cannot have independent subsets that do not visit each other \cite{oxtoby_1952}. Thus, only trivial subsets of $M$ can be invariant under the action of $f$, that is
\begin{equation}
    \forall A \in \mathcal{B} \text{ s.t. } f^{-1}(A) = A, \text{ either } \mu(A) = 0 \text{ or } 1.
\end{equation}
A crucial property of ergodic systems is that time averages can be replaced by phase space averages. For any $\mu$-integrable observable $O:M \to \mathbb{R}$, Birkhoff's ergodic theorem \cite{birkhoff_1931} states that for $\mu$-almost any $x \in M$,
\begin{equation}
    \lim_{N\to\infty}\frac{1}{N+1}\sum_{n=0}^{N} O(f^n(x)) = \int_M d\mu \ O.
\end{equation}

Although ergodicity requires a trajectory to visit almost the entire phase space, it does not imply SDIC. Trajectories initialized close to each other can stay close indefinitely. As an example, consider irrational rotations on the circle, given by the map
\begin{equation}
    \theta_{n+1} = \theta_n + 2\pi\alpha \ (\text{mod } 2\pi),
\end{equation}
where $\alpha$ is an irrational number. Each iteration of the above map simply rotates the state by an angle $2\pi\alpha$. This map is ergodic, since $\alpha$ being irrational guarantees that any trajectory will come indefinitely close to all points on the circle. However, a localized ensemble of initial conditions will always remain localized and will never spread out over the circle. Thus, any two trajectories in such an ensemble will remain at a constant separation.

Mixing systems, on the other hand, allow nearby trajectories to spread out. A system is called strongly mixing if for any two sets $A_1, \ A_2 \in \mathcal{B}$, any trajectory that starts in $A_1$ forgets its initial condition after many iterations, and the probability of finding this trajectory in $A_2$ only depends on the size of $A_2$: that is,
\begin{equation}
    \lim_{n\to\infty} \mu\left(A_1 \cap f^{-n}(A_2)\right) = \mu(A_1)\mu(A_2).
\end{equation}
This condition is slightly relaxed in weakly mixing systems by allowing this limit to hold on average. A system is called ``weakly mixing" if for any two sets $A_1, \ A_2 \in \mathcal{B}$,
\begin{equation}
    \lim_{N\to\infty} \frac{1}{N+1}\sum_{n=0}^N \left|\mu\left(A_1 \cap f^{-n}(A_2)\right) - \mu(A_1)\mu(A_2)\right| = 0.
\end{equation}
Thus, one expects nearby trajectories in mixing systems to diverge away from each other, since any initial ensemble must spread over the entire phase space.

Let us now define SDIC, which is a crucial feature of chaotic systems. As prescribed by Devaney's definition of chaos, SDIC corresponds to the fact that any blob of initial conditions, no matter how small, will eventually grow larger than a certain threshold value. We refer to this definition as ``Devaney SDIC."
\begin{definition}[Devaney SDIC]
    A system described by a bounded measure space $(M,\mathcal{B},\mu,f)$ with a metric $d$ is considered to depend sensitively on its initial conditions if $\exists \epsilon > 0$, such that for any open set $B\in\mathcal{B}$ with $\mu(B) > 0$, $\exists n \geq 0$ for which
    \begin{equation}
        \sup \left\{ d\left(f^n(x),f^n(y)\right) \mid x, y \in B\right\} \geq \epsilon.
    \end{equation}
\end{definition}
Note that Devaney's definition of chaos does not prescribe the value this threshold $\epsilon$ must take. If $\epsilon$ is much smaller than the dimensions of the space $M$, then the degree of unpredictability is minimal. Therefore, we will define a slightly stronger version of SDIC, which requires not only that nearby trajectories diverge, but that they separate as much as possible within the space. We will refer to this condition as ``global SDIC." Given a blob of initial conditions $B$, we will quantify the separation between trajectories after $n$ iterations by the diameter of $f^n(B)$, where the diameter of any open set is defined as
\begin{equation}
    \text{dia}(B) = \sup\{d(x,y) \mid x,y \in B\}.
\end{equation}
We will primarily be concerned with initial conditions in the support of $\mu$, where
\begin{equation}
    \text{supp}(\mu) = \{x \in M \mid \forall N_x \in \mathcal{B}, \ \mu(N_x) > 0\}.
\end{equation}
Then, our SDIC condition will require any set of initial conditions to eventually span the entire support.

\begin{definition}[Global SDIC]
    Consider a system described by a bounded measure space $(M,\mathcal{B},\mu,f)$ with a metric $d$. This system is considered to  depend sensitively on its initial conditions if any open set $B \subseteq \mathrm{supp}(\mu)$ eventually spans the entire supported space: that is,
    \begin{equation}
        \sup\{\mathrm{dia}(f^n(B)) \mid n \geq 0\} = \mathrm{dia}(\mathrm{supp}(\mu)).
    \end{equation}
\end{definition}

Clearly, SDIC is not a sufficient condition for chaos. As noted in \cite{werndl_2009}, one can consider an expanding map, such as $x_{n+1} = c x_n$, with $c > 1$ and $x \in (0,\infty)$, where nearby trajectories diverge exponentially fast, and yet the system remains predictable. Notably, such a map is not ergodic, since a generic trajectory is not dense in the phase-space. Thus, for a system to be chaotic, we will also require it to be ergodic.

Mixing systems are by definition ergodic, and are also known to exhibit Devaney SDIC, as shown in \cite{werndl_2009, he_2004}. We will now show that weak mixing also implies global SDIC.

%We consider a system to sensitively depend on its initial conditions if any error in the initial condition, no matter how small, eventually becomes larger than a certain threshold. That is, there exists an $\epsilon > 0$, such that for any open set $B \in \mathcal{B}$ with $\mu(B) > 0$, there are infinitely many $n > 0$ for which
%\begin{equation}
%    \sup \left\{ d\left(f^n(x),f^n(y)\right) \mid x, y \in B\right\} \geq \epsilon.
%    \label{eq_sdic}
%\end{equation}
%We sketch below why weak mixing leads to this form of SDIC.

\begin{figure}
    \centering
    \includegraphics[width=0.9\linewidth]{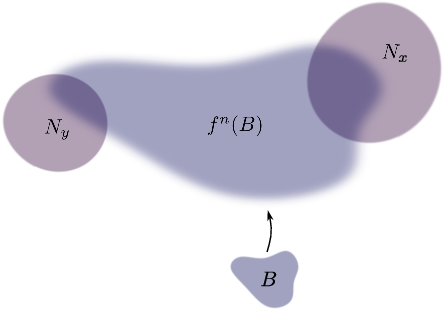}
    \caption{Weak mixing implies global SDIC -- any open set of states $B$ will eventually spread over the support of $\mu$. Thus, no matter how small the uncertainty in the initial condition, it always becomes large in the future.}
    \label{fig_mixingSDIC}
\end{figure}

\begin{theorem}
    Consider a system described by a bounded measure space $(M,\mathcal{B},\mu,f)$, with a metric $d$. This system exhibits global SDIC if it is weakly mixing.
\end{theorem}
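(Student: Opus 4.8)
The plan is to prove the equality defining global SDIC by establishing the two inequalities separately, with essentially all of the work lying in the lower bound. For the trivial upper bound I would first note that invariance of $\mu$ forces $\mathrm{supp}(\mu)$ to be forward invariant under $f$: if $x\in\mathrm{supp}(\mu)$ and $V$ is any neighborhood of $f(x)$, then $f^{-1}(V)$ is a neighborhood of $x$ with $\mu(V)=\mu(f^{-1}(V))>0$, so $f(x)\in\mathrm{supp}(\mu)$. Hence $f^n(B)\subseteq\mathrm{supp}(\mu)$ for every $n$, which gives $\mathrm{dia}(f^n(B))\le\mathrm{dia}(\mathrm{supp}(\mu))$ and therefore $\sup_n \mathrm{dia}(f^n(B))\le\mathrm{dia}(\mathrm{supp}(\mu))$; boundedness of $M$ ensures this quantity is finite.

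For the lower bound it suffices to show that for every $\delta>0$ there is an iterate $n$ with $\mathrm{dia}(f^n(B))>\mathrm{dia}(\mathrm{supp}(\mu))-\delta$. I would start by choosing, from the definition of the diameter as a supremum, two points $p,q\in\mathrm{supp}(\mu)$ with $d(p,q)>\mathrm{dia}(\mathrm{supp}(\mu))-\delta/2$, and then take small open balls $A_1,A_2$ of radius $r<\delta/4$ centered at $p$ and $q$ respectively. Because $p,q$ lie in the support, both balls have positive measure, and because $B$ is a nonempty open subset of the support, $\mu(B)>0$ as well. The key idea is that weak mixing should force the forward image $f^n(B)$ to meet both $A_1$ and $A_2$ for some common $n$: if $x_1,x_2\in B$ satisfy $f^n(x_1)\in A_1$ and $f^n(x_2)\in A_2$, then $f^n(x_1),f^n(x_2)\in f^n(B)$ and the triangle inequality gives $d(f^n(x_1),f^n(x_2))\ge d(p,q)-2r>\mathrm{dia}(\mathrm{supp}(\mu))-\delta$, as required.

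The heart of the argument is therefore producing a single $n$ for which both $\mu(B\cap f^{-n}(A_1))>0$ and $\mu(B\cap f^{-n}(A_2))>0$, since positivity of these measures immediately yields the points $x_1,x_2$ above. Applying weak mixing to the pairs $(B,A_1)$ and $(B,A_2)$ gives
\begin{equation}
    \lim_{N\to\infty}\frac{1}{N+1}\sum_{n=0}^N \left|\mu(B\cap f^{-n}(A_i))-\mu(B)\mu(A_i)\right|=0,\qquad i=1,2.
\end{equation}
I would then argue that the ``bad'' index sets $S_i(N)=\{0\le n\le N : \mu(B\cap f^{-n}(A_i))<\tfrac12\mu(B)\mu(A_i)\}$ have vanishing density: a Markov-type estimate bounds $\tfrac12\mu(B)\mu(A_i)\,|S_i(N)|$ by the sum above, which is $o(N)$. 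Consequently the set of indices on which $\mu(B\cap f^{-n}(A_i))\ge\tfrac12\mu(B)\mu(A_i)>0$ holds for both $i=1,2$ simultaneously has density one, and in particular is nonempty for large $N$; any such $n$ is the one we seek. Letting $\delta\to 0$ then upgrades the lower bound to $\sup_n\mathrm{dia}(f^n(B))\ge\mathrm{dia}(\mathrm{supp}(\mu))$, and combining with the upper bound proves the equality.

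The main obstacle I anticipate is exactly this last step — upgrading the time-averaged (Cesàro) smallness supplied by weak mixing to the existence of one instant $n$ at which $f^n(B)$ simultaneously populates two widely separated regions. Strong mixing would hand this over directly for all large $n$, but weak mixing controls only an average, so the density-one argument (a hands-on version of the Koopman--von Neumann lemma) is essential. A minor secondary point is checking the forward invariance of $\mathrm{supp}(\mu)$ used in the upper bound, which relies on $f$ being continuous (or at least measurable in a way compatible with the support).
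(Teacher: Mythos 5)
Your proof is correct, and it takes a genuinely different route from the paper's. The paper's argument leans on a cited black-box result: weak mixing of $(M,\mathcal{B},\mu,f)$ implies weak mixing, hence ergodicity, of the product system $(M\times M, f\times f, \mu\times\mu)$; ergodicity of the product then yields a single time $n$ at which $(f\times f)^n(B\times B)$ meets $N_x\times N_y$, i.e.\ $f^n(B)$ intersects both neighborhoods simultaneously. You instead manufacture that simultaneous time by hand: applying the definition of weak mixing to the pairs $(B,A_1)$ and $(B,A_2)$ and running a Markov-type estimate, you show each set of ``bad'' times has density zero, so the good times for both targets form a density-one (in particular nonempty) set. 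This is exactly the Koopman--von Neumann mechanism that underlies the product theorem the paper cites, so your proof is more elementary and self-contained, at the cost of a few extra lines; the paper's is shorter but outsources the heart of the matter to the ergodic-theory literature. Two further points in your favor: you prove the upper bound $\sup_n \mathrm{dia}(f^n(B)) \leq \mathrm{dia}(\mathrm{supp}(\mu))$ explicitly via forward invariance of the support (the paper only establishes the lower bound, leaving the equality in the definition of global SDIC implicit), and you correctly flag that this step needs continuity of $f$ --- an assumption the paper never states but also tacitly relies on. One cosmetic remark: your quantifier structure (fix $\delta$, pick near-diametral points $p,q$ and balls of radius $<\delta/4$) is just a repackaging of the paper's ``for all $x,y$ and all neighborhoods $N_x$, $N_y$'' step, so the two proofs agree in overall architecture; the real divergence is only in how the simultaneous hitting time is produced.
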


\begin{proof}
    Consider an open set $B \subseteq \mathrm{supp}(\mu)$ and any two points $x,y \in \mathrm{supp}(\mu)$. We will show that for any two open neighborhoods $N_x$ and $N_y$, there exists an $n\geq 0$ for which $f^n(B)$ simultaneously intersects both neighborhoods (see Fig. \ref{fig_mixingSDIC} for an illustration). A standard result in ergodic theory states that the product space of a weakly mixing system is also weakly mixing, and therefore, ergodic \cite{einsiedler_2012}. Thus, there exists an $n\geq 0$ such that $f^n\times f^n (B\times B)$ visits $B_1\times B_2$. This implies our claim that $f^n(B)$ simultaneously intersects $N_x$ and $N_y$. Since this is true for all neighborhoods of $x$ and $y$, $\sup\{\mathrm{dia}(f^n(B)) \mid n \geq 0\}$ must be at least as big as $d(x,y)$. And since this is true for all $x$ and $y$ in the support, the global SDIC condition must also be true.
    %Given an open set $B \in \mathcal{B}$, it suffices to prove that for any pair of open sets $B_1, B_2 \in \mathcal{B}$ with $\mu(B_1)$ and $\mu(B_2) > 0$, there is a $n \geq 0$ for which $f^n(B)$ will simultaneously intersect both sets (see Fig. \ref{fig_mixingSDIC}). A standard result in ergodic theory states that the product space of a weakly mixing system is also weakly mixing, and therefore, ergodic. Consider the product sets $B\times B = \{(x,y) \mid x,y\in B\}$, and $B_1\times B_2 = \{(x_1,x_2) \mid x_1\in B_1, \ x_2 \in B_2\}$. Since $M\times M$ is ergodic, there exists at least one $n \geq 0$ for which $f^n\times f^n (B\times B)$ visits $B_1\times B_2$. Thus, one can find $x,y \in B$ such that $f^n(x) \in B_1$ and $f^n(y) \in B_2$. This implies that for this $n$, the diameter of the ensemble $\epsilon(f^n(B))$ is at least the distance between the two open sets $D(B_1,B_2)$. Since this is true for all pairs $B_1$ and $B_2$, the global SDIC property must be true.
\end{proof}

Note that this SDIC condition does not require nearby trajectories to separate exponentially fast, and therefore, is a weaker condition than requiring a positive Lyapunov exponent. Nevertheless, even in systems where nearby trajectories diverge sub-exponentially, information about the initial condition is lost at large times. Therefore, we will consider global SDIC along with ergodicity, and consequently weak mixing, to be a characteristic feature of chaotic systems.

\section{Non-Chaotic Attractors}
\label{app_attractors}

We will first study the behavior of the observable drift in systems with non-chaotic attractors -- specifically, fixed point and limit cycle attractors. These are simple examples of non-ergodic systems, and it is easy to see how a fixed point attractor corresponds to dissipative dynamics, while a limit-cycle attractor asymptotically approaches regular dynamics. The analysis will be focused on one-dimensional maps, but should be generalizable to higher dimensional systems.

\subsection{Fixed Point Attractor}
\label{app_FixedPtAttractor}

We prove the following proposition for trajectories in one-dimensional maps that converge to a fixed point attractor. %We will first prove this statement for the $x$-drift, and later generalize it to other observables.
\begin{theorem}
    Consider a one-dimensional map $f$ with a stable fixed point $x^*$, such that $|f'(x^*)| < 1$. If a trajectory $\{f^n(x)\}$ converges to this fixed point, then the $O$-drift variance for some observable O, given by Eq. \ref{eq_driftVar}, converges to zero as $N \to \infty$. Moreover, the variance scales as $\sigma_N^2(x) \sim \mathcal{O}(1/N)$ for large $N$.
\end{theorem}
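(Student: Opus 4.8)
The plan is to exploit the fact that a stable fixed point with $|f'(x^*)| < 1$ attracts nearby trajectories geometrically, so that the summands defining the drift decay exponentially and every sum in sight converges. First I would fix a continuous (locally Lipschitz) observable $O$ and set $\lambda = |f'(x^*)|$. Since the trajectory converges to $x^*$, it eventually enters a neighborhood on which $f$ is a contraction; linearizing about $x^*$ and choosing any $\lambda < \rho < 1$, a standard argument gives $|f^n(x) - x^*| \leq C \rho^n$ for all large $n$. Because $O$ is Lipschitz near $x^*$, this immediately yields the geometric bound $|a_n| \leq C' \rho^n$ for $a_n := O(f^n(x)) - O(x^*)$. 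Since $O(f^n(x)) \to O(x^*)$, the Ces\`aro average defining $\bar{O}(x)$ converges to the same limit, so $\bar{O}(x) = O(x^*)$ and the drift is exactly $\Delta_N(x) = \sum_{n=0}^N a_n$.

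The geometric decay of $a_n$ makes $\sum_n a_n$ absolutely convergent, so $\Delta_N \to \Delta_\infty := \sum_{n=0}^\infty a_n$, and the remainder $\epsilon_N := \Delta_N - \Delta_\infty = -\sum_{k > N} a_k$ again obeys a geometric bound $|\epsilon_N| \leq C'' \rho^N$. Writing $\Delta_n = \Delta_\infty + \epsilon_n$ and substituting into Eq. \ref{eq_driftVar}, the constant $\Delta_\infty$ cancels and the variance collapses to
\begin{equation}
    \sigma_N^2(x) = \frac{1}{N+1}\sum_{n=0}^N \epsilon_n^2 - \left(\frac{1}{N+1}\sum_{n=0}^N \epsilon_n\right)^2.
\end{equation}
This is the key simplification: the drift variance is just the empirical variance of the exponentially small tails $\epsilon_n$.

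Finally I would read off both claims from the convergence of $\sum_n \epsilon_n^2$ and $\sum_n \epsilon_n$, which hold because $\epsilon_n$ decays geometrically. The first term satisfies $\frac{1}{N+1}\sum_{n=0}^N \epsilon_n^2 = \frac{S_2}{N+1} + o(1/N)$ with $S_2 = \sum_{n=0}^\infty \epsilon_n^2 < \infty$, while the second term is $\left(\frac{1}{N+1}\sum_{n=0}^N \epsilon_n\right)^2 = \mathcal{O}(1/N^2)$ and hence subleading. Together these give $\sigma_N^2(x) = S_2/N + \mathcal{O}(1/N^2) \to 0$, establishing both the vanishing of the variance and the $\mathcal{O}(1/N)$ rate. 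The main obstacle is the first step: rigorously promoting the hypothesis $|f'(x^*)|<1$ to a uniform geometric bound on $|f^n(x)-x^*|$ (controlling the approach through the nonlinear regime before linearization takes over) and pinning down the regularity of $O$ needed for $a_n$, and hence $\epsilon_n$, to inherit that decay. Everything after the geometric estimate is a routine consequence of summability.
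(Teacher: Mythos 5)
Your proposal is correct and follows essentially the same route as the paper's proof: establish geometric decay of $|O(f^n(x)) - O(x^*)|$ via the contraction near the stable fixed point, deduce exponential convergence of the drift to a limit $\Delta_\infty$, and then observe that the variance reduces to the empirical variance of the exponentially small remainders, whose square-sum term supplies the $\Theta(1/N)$ behavior while the squared mean is $\mathcal{O}(1/N^2)$. The only cosmetic differences are that you make $\bar{O}(x) = O(x^*)$ explicit (the paper uses this implicitly) and you identify the $1/N$ coefficient as $S_2 = \sum_n \epsilon_n^2$, whereas the paper gets the matching lower bound from the single term $\tilde{\Delta}_0^2/(N+1)$; both rest on the same implicit non-degeneracy assumption.
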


Before we prove this proposition, we state some useful lemmas.

\begin{lemma}
    \label{lemma_expConv}
    If a trajectory $\{f^n(x)\}$ converges to a fixed point $x^*$, then it converges exponentially fast. That is, $|O(f^n(x)) - O(x^*)| < C\lambda^n$, for some constants $C > 0$ and $0 < \lambda < 1$.
\end{lemma}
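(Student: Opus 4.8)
The plan is to exploit the strict contraction $|f'(x^*)| < 1$ to show that, once the orbit enters a small neighborhood of $x^*$, successive distances to $x^*$ shrink geometrically, and then to transfer this geometric decay from the phase-space coordinate to the observable via a local Lipschitz bound on $O$. The convergence hypothesis guarantees the orbit eventually lands in the contracting neighborhood, so only the tail of the orbit needs the contraction estimate; the finitely many initial points are absorbed into the prefactor.

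First I would fix a rate $\lambda$ with $|f'(x^*)| < \lambda < 1$. By the definition of the derivative at $x^*$, there is a $\delta > 0$ such that $|f(y) - x^*| = |f(y) - f(x^*)| \leq \lambda |y - x^*|$ for every $y$ in $U = (x^* - \delta, x^* + \delta)$; note this uses only differentiability of $f$ at the fixed point, not $C^1$ regularity everywhere. Since $\{f^n(x)\}$ converges to $x^*$, there exists $N_0$ with $f^n(x) \in U$ for all $n \geq N_0$. Applying the contraction inductively then gives, for $n \geq N_0$,
\[
|f^n(x) - x^*| \leq \lambda^{\,n - N_0}\,|f^{N_0}(x) - x^*|.
\]

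Next I would transfer this bound to $O$. Assuming $O$ is Lipschitz on $U$ with constant $L$ (equivalently, $O \in C^1$ near $x^*$), for all $n \geq N_0$ one obtains $|O(f^n(x)) - O(x^*)| \leq L|f^n(x) - x^*| \leq L\lambda^{-N_0}|f^{N_0}(x) - x^*|\,\lambda^n$. Setting $C_1 = L\lambda^{-N_0}|f^{N_0}(x) - x^*|$ establishes the claimed bound on the tail $n \geq N_0$. For the finitely many indices $n < N_0$ the numbers $|O(f^n(x)) - O(x^*)|$ are finite, so enlarging the prefactor to $C = \max\{C_1,\ \max_{0 \le n < N_0} \lambda^{-n}|O(f^n(x)) - O(x^*)|\}$ yields $|O(f^n(x)) - O(x^*)| < C\lambda^n$ for every $n$, as required.

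The main obstacle I anticipate is not the geometric estimate itself, which is standard, but pinning down the regularity hypothesis that makes the statement as phrased (in terms of $O$ rather than the coordinate $x$) correct: a merely measurable $O$ need not inherit exponential decay, so one must assume at least local Lipschitz continuity of $O$ at $x^*$. A secondary technical point is the degenerate case $f'(x^*) = 0$, where the derivative bound still yields contraction for any $\lambda \in (0,1)$ and the argument goes through unchanged (indeed convergence is then faster than any such $\lambda^n$), so no separate treatment is needed.
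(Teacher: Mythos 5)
Your proposal is correct and follows essentially the same route as the paper's proof: geometric contraction of $|f^n(x)-x^*|$ once the orbit enters a neighborhood of the fixed point, transfer of this decay to $O$ via a local Lipschitz-type bound, and absorption of the finitely many initial iterates into the prefactor $C$. The only differences are technical refinements on your side --- you extract the contraction from the definition of the derivative at $x^*$ instead of invoking the mean value theorem with $f'$ continuous on a compact neighborhood, and you state explicitly that $O$ must be locally Lipschitz (the paper assumes $O$ is smooth and applies the mean value theorem to it) --- which slightly weaken the regularity hypotheses without changing the structure of the argument.
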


\begin{proof}
    Since $x^*$ is a stable fixed point, $|f'(x^*)| < 1$. Therefore, assuming that $f'$ is continuous, we can find a compact neighborhood $U$ around $x^*$ such that $|f'(y)| < \lambda < 1$ for all $y \in U$. Since the trajectory $f^n(x)$ converges to $x^*$, we can find a $M$ such that $\forall n \geq M$, $f^n(x) \in U$. For all such $n$, the mean value theorem allows us to write:
    \begin{align}
        |f^n(x) - x^*| &= |f(f^{n-1}(x)) - f(x^*)| \notag\\&= |f'(y_n)| |f^{n-1}(x) - x^*|,
    \end{align}
    where $y_n \in U$. Consequently, $|f'(y_n)| < \lambda$, and
    \begin{equation}
        |f^n(x) - x^*| < \lambda |f^{n-1}(x) - x^*| < \lambda^{n-M}|f^M(x)-x^*|.
    \end{equation}
    Choosing $D = \sup \bigl\{\frac{|f^n(x)-x^*|}{\lambda^n} \bigm| 0 \leq n \leq M\bigr\}$, we get
    \begin{equation}
        |f^{n}(x) - x^*| < D\lambda^{n}.
    \end{equation}
    Since $O$ is a smooth function of $x$, its derivative must be bounded on $U$. That is,$\forall y \in U$, $|O'(y)| \leq \frac{C}{D}$, for some positive constant $C$. Then, the mean value theorem gives us:
    \begin{equation}
        |O(f^n(x))-O(x^*)| = |O'(z_n)||f^n(x)-x^*|,
    \end{equation}
    for some $z_n \in U$. Therefore,
    \begin{equation}
        |O(f^n(x))-O(x^*)| \leq \frac{C}{D}|f^n(x)-x^*| \leq C \lambda^n.
    \end{equation}
\end{proof}

\begin{lemma}
    \label{lemma_driftExpConv}
    The drift on such a trajectory converges exponentially fast to a finite value $\Delta_\infty(x)$ in the limit $N\to\infty$.
\end{lemma}
\begin{proof}
    We note that $\Delta_N(x)$ converges absolutely via the comparison test, since
    \begin{align}
        \sum_{n=0}^N |O(f^n(x))-O(x^*)| \leq \sum_{n=0}^N C\lambda^n = C\left(\frac{1-\lambda^{N+1}}{1-\lambda}\right).
    \end{align}
    And therefore,
    \begin{equation}
        \lim_{N\to\infty} |\Delta_N(x)| \leq \frac{C}{1-\lambda}.
    \end{equation}
    Further,
    \begin{align}
        |\Delta_N(x) - \Delta_\infty(x)| &\leq \sum_{n=N+1}^\infty |O(f^n(x))-O(x^*)| \notag\\&\leq \frac{C\lambda^{N+1}}{1-\lambda},
    \end{align}
    which proves exponential convergence.
\end{proof}

Finally, we use the above two lemmas to prove our original assertion.
\begin{proof}
    Let us define $\tilde{\Delta}_n(x) = \Delta_n(x) - \Delta_\infty(x)$. Then, the drift variance can be written as:
    \begin{equation}
        \sigma_N^2(x) = \frac{1}{N+1}\sum_{n=0}^N \tilde{\Delta}_n^2(x) - \left(\frac{1}{N+1}\sum_{n=0}^N \tilde{\Delta}_n(x)\right)^2.
    \end{equation}

    Both terms in the above equation converge to zero as $N\to\infty$, since:
    \begin{subequations}
        \begin{equation}
            \left|\frac{1}{N+1}\sum_{n=0}^N \tilde{\Delta}_n(x)\right| \leq \frac{C\lambda}{(1-\lambda)^2(N+1)},
        \end{equation}
        \begin{equation}
            \left|\frac{1}{N+1}\sum_{n=0}^N \tilde{\Delta}_n^2(x)\right| \leq \frac{C\lambda^2}{(1-\lambda)^2(1-\lambda^2)(N+1)}.
        \end{equation}
    \end{subequations}
    Therefore, we conclude that $\displaystyle\lim_{N\to\infty}\sigma^2_N(x) = 0$. Further note that while $\frac{1}{N+1}\sum_{n=0}^N \tilde{\Delta}_n(x)$ can decay faster than $\mathcal{O}(1/N)$, the term $\frac{1}{N+1}\sum_{n=0}^N \tilde{\Delta}_n^2(x)$ has to decay as $\mathcal{O}(1/N)$, since it is bounded from below by a term that scales as $1/N$:
    \begin{equation}
        \frac{\tilde{\Delta}_0^2(x)}{N+1} \leq \left|\frac{1}{N+1}\sum_{n=0}^N \tilde{\Delta}_n^2(x)\right|,
    \end{equation}
    for every $N \geq 0$. Therefore, $\sigma_N^2(x)$ shows $1/N$ scaling at large $N$.
\end{proof}

\subsection{Limit-cycle Attractor}
\label{app_limitCycle}

Any trajectory that converges to a limit-cycle attractor will eventually cycle through a finite number of points. We demonstrate that in this case, the drift variance converges to a finite value, and does so at a rate $\sim\mathcal{O}(1/N)$. To demonstrate this, we will consider the $O$-drift variance on a period-2 limit-cycle, but the claim can be easily generalized to any periodic orbit.

\begin{theorem}
    Consider a one-dimensional map $f$ with a stable period-2 limit-cycle $\{x_0^*, x_1^*\}$. The $O$-drift variance on any trajectory $\{f^n(x)\}$ that converges to this limit-cycle converges to a positive value $\delta^2/4$, where $\delta = \left|\frac{O(x_1^*)-O(x_0^*)}{2}\right|$. Further, the rate of convergence is $|\sigma_N^2(x) - \delta^2/4| \sim \mathcal{O}(1/N)$.
\end{theorem}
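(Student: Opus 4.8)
The plan is to reduce the period-2 problem to the fixed-point analysis of the previous subsection by passing to the second iterate $g = f^2$. Since the chain rule gives $(f^2)'(x_0^*) = (f^2)'(x_1^*) = f'(x_0^*)f'(x_1^*)$, stability of the limit cycle means that both $x_0^*$ and $x_1^*$ are stable fixed points of $g$. A trajectory $\{f^n(x)\}$ converging to the cycle therefore splits into two subsequences: the even iterates $\{f^{2m}(x)\}$ converge to one of the points (say $x_0^*$) and the odd iterates $\{f^{2m+1}(x)\}$ to the other ($x_1^*$), each exponentially fast by applying Lemma \ref{lemma_expConv} to $g$. Consequently $O(f^n(x))$ converges exponentially to $O(x_0^*)$ along even $n$ and to $O(x_1^*)$ along odd $n$.

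First I would pin down the long-time average. Because the even and odd subsequences of $O(f^n(x))$ converge and are asymptotically equinumerous, the Cesàro limit is $\bar{O}(x) = \tfrac{1}{2}\bigl(O(x_0^*)+O(x_1^*)\bigr)$. Writing the centered increments $a_n = O(f^n(x)) - \bar{O}$, I would split $a_n = c_n + e_n$, where $c_n = +\alpha$ for even $n$ and $c_n = -\alpha$ for odd $n$ with $\alpha = \tfrac{1}{2}\bigl(O(x_0^*)-O(x_1^*)\bigr)$ (so $\alpha^2 = \delta^2$), and $|e_n| \le C\lambda^n$ is the exponentially small remainder. Summing, the drift decomposes as $\Delta_N = B_N + E_N$, where $B_N = \sum_{n=0}^N c_n$ is an exactly $2$-periodic, bounded sequence taking the value $\alpha$ for even $N$ and $0$ for odd $N$, while $E_N = \sum_{n=0}^N e_n$ converges exponentially to a finite constant $E_\infty$ with $|E_N - E_\infty| \le C\lambda^{N+1}/(1-\lambda)$ by the comparison test, as in Lemma \ref{lemma_driftExpConv}.

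The computation then proceeds by inserting $\Delta_n = B_n + E_\infty + r_n$, with $|r_n|\le C'\lambda^n$, into the two Cesàro sums defining $\sigma_N^2$. The key observations are that $\frac{1}{N+1}\sum_{n=0}^N B_n = \alpha\,\frac{\lfloor N/2\rfloor + 1}{N+1} = \tfrac{\alpha}{2} + \mathcal{O}(1/N)$ and $\frac{1}{N+1}\sum_{n=0}^N B_n^2 = \tfrac{\alpha^2}{2} + \mathcal{O}(1/N)$, since the fraction of even indices in $\{0,\dots,N\}$ differs from $1/2$ by a term of order $1/N$; meanwhile every sum involving $r_n$ contributes only $\mathcal{O}(1/N)$, because $\sum_n |r_n|$ is bounded by a convergent geometric series. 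Carrying these through yields $\frac{1}{N+1}\sum_n \Delta_n^2 = \tfrac{\alpha^2}{2} + \alpha E_\infty + E_\infty^2 + \mathcal{O}(1/N)$ and $\bigl(\frac{1}{N+1}\sum_n \Delta_n\bigr)^2 = \tfrac{\alpha^2}{4} + \alpha E_\infty + E_\infty^2 + \mathcal{O}(1/N)$; upon subtracting, the terms linear and quadratic in $E_\infty$ cancel and the constant $E_\infty$ drops out entirely, leaving $\sigma_N^2(x) = \tfrac{\alpha^2}{4} + \mathcal{O}(1/N) = \tfrac{\delta^2}{4} + \mathcal{O}(1/N)$.

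The genuinely delicate part is not the leading-order value $\delta^2/4$, which follows from the simple picture that $\Delta_n$ alternates between two plateaus (so that the variance of a two-valued sequence $\{-\alpha,0\}$ taken with equal weights is $\alpha^2/4$), but rather the careful bookkeeping needed to certify the $\mathcal{O}(1/N)$ rate. One must track the parity-dependent $\mathcal{O}(1/N)$ deviation of the even-index fraction from $1/2$ and verify that it, rather than the exponentially small remainders, sets the convergence rate; the latter are faster and hence subdominant. As in the fixed-point case, a matching lower bound of order $1/N$ can be extracted from the same parity term, so the convergence is genuinely $\Theta(1/N)$ rather than merely $o(1)$.
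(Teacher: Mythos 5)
Your main computation is correct and follows essentially the same route as the paper's proof: pass to the second iterate so that Lemma \ref{lemma_expConv} controls the even and odd subsequences, split the drift into an exactly $2$-periodic plateau sequence plus an exponentially convergent remainder (your $B_N + E_N$ is, up to sign conventions, the paper's $-\frac{1+(-1)^N}{2}\delta + \zeta_N$), and check that all terms involving the constant $E_\infty$ cancel between the two Ces\`aro sums, leaving $\sigma_N^2(x) = \delta^2/4 + \mathcal{O}(1/N)$. Up to that point the argument is sound and establishes the proposition as stated.

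Your closing paragraph, however, misidentifies the source of the $\mathcal{O}(1/N)$ rate, and the claimed matching lower bound is false. Let $p_N = \frac{\lfloor N/2\rfloor+1}{N+1} = \frac{1}{2}+\epsilon_N$ be the fraction of even indices, so that $\epsilon_N = \frac{1}{2(N+1)}$ for even $N$ and $\epsilon_N = 0$ for odd $N$. Since $B_n \in \{0,\alpha\}$, we have $B_n^2 = \alpha B_n$, and the plateau sequence contributes to the variance exactly
\begin{equation}
    \alpha^2 p_N - \alpha^2 p_N^2 = \alpha^2 p_N(1-p_N) = \frac{\alpha^2}{4} - \alpha^2\epsilon_N^2 ,
\end{equation}
i.e., the two parity-dependent $\mathcal{O}(1/N)$ deviations you propose to track cancel identically in the variance combination, and the parity term deviates from $\alpha^2/4$ only at $\mathcal{O}(1/N^2)$. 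What generically sets the $1/N$ rate is precisely the remainder terms you dismiss as ``faster and hence subdominant'': although each $r_n$ is exponentially small, the sums
\begin{equation}
    \frac{1}{N+1}\sum_{n=0}^N r_n^2
    \quad\text{and}\quad
    \frac{\alpha}{2(N+1)}\sum_{n=0}^N (-1)^n r_n
\end{equation}
(the second being the cross term between plateau and remainder, up to $\mathcal{O}(1/N^2)$ corrections) have numerators that converge to finite, generically nonzero constants, so they decay exactly like $1/N$, not faster. This is also how the paper's own proof obtains the rate: through $\xi_N$ and the term $\frac{\delta}{N+1}\sum_n(-1)^n\zeta_n$, i.e., through the remainders. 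As a consequence, the strengthened $\Theta(1/N)$ statement fails for admissible trajectories: if $x = x_0^*$, the trajectory trivially converges to the cycle, every $r_n$ vanishes, and $|\sigma_N^2(x)-\delta^2/4| = \alpha^2\epsilon_N^2$, which is zero for odd $N$ and $\Theta(1/N^2)$ for even $N$. None of this affects the proposition as stated, since $\mathcal{O}(1/N)$ there is an upper bound and your second and third paragraphs do prove it; but the ``verification'' outlined in your final paragraph would not go through if carried out.
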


\begin{proof}
    Without loss of generality, let us assume that $O(x_1^*) > O(x_0^*)$, and that even iterates of the trajectory converge to $x_0^*$, while odd iterates converge to $x_1^*$. For simplicity, we will denote $O(f^n(x))$ by $O_n$ henceforth. Since $\{x_0^*,x_1^*\}$ form a stable period-2 cycle, both $x_0^*$ and $x_1^*$ are stable fixed points of the second iterate of the map $g(x) = f(f(x))$, with $|g'(x_0^*)| < 1$ and $|g'(x_1^*)| < 1$. Following Lemma \ref{lemma_expConv}, we have:
    \begin{equation}
        |O_{2m} - O_0^*| < C\lambda^{2m}, \quad |O_{2m+1} - O_1^*| < C\lambda^{2m+1},
    \end{equation}
    for some $C > 0$ and $0<\lambda<1$. That is, the even and odd iterates of $f$ converge exponentially fast to $O_0^*$ and $O_1^*$ respectively. Further, it is easy to see that the long-time average of $O$ is simply:
    \begin{equation}
        \bar{O} = \lim_{N\to\infty}\frac{1}{N+1}\sum_{n=0}^N O_n = \frac{O_0^*+O_1^*}{2}.
    \end{equation}

    Then,
    \begin{subequations}
        \begin{equation}
            O_{2m}-\bar{O} = O_{2m} - O_0^* - \delta,
        \end{equation}
        \begin{equation}
            O_{2m+1}-\bar{O} = O_{2m+1} - O_1^* + \delta,
        \end{equation}
    \end{subequations}
    where $\delta = \frac{O_1^* - O_0^*}{2}$. Consequently, the drift can be expressed as:
    \begin{equation}
        \Delta_N(x) = \zeta_N(x) - \frac{1+(-1)^N}{2}\delta,
    \end{equation}
    where,
    \begin{equation}
        \zeta_N(x) = \sum_{m=0}^{\lfloor \frac{N}{2} \rfloor} (O_{2m}-O_0^*) + \sum_{m=0}^{\lfloor \frac{N-1}{2} \rfloor} (O_{2m+1}-O_1^*).
    \end{equation}
    Thus, the drift variance can be expressed as
    \begin{equation}
        \sigma_N^2(x) = \frac{\delta^2}{4} + \xi_N^2(x) - \frac{\delta}{N+1}\sum_{n=0}^N \zeta_n(x) (-1)^n,
        \label{eq_driftVarLimitCycle}
    \end{equation}
    where $\xi_N(x)$ is the variance of $\zeta_N(x)$,
    \begin{equation}
        \xi_N(x) = \frac{1}{N+1}\sum_{n=0}^N \zeta_n^2(x) - \left(\frac{1}{N+1}\sum_{n=0}^N \zeta_n(x)\right)^2.
    \end{equation}
    Following Lemma \ref{lemma_driftExpConv}, $\zeta_N(x)$ must converge to a finite value at an exponentially fast rate,
    \begin{equation}
        |\zeta_N(x)-\zeta_\infty(x)| \leq \frac{C\lambda^{N+1}}{1-\lambda},
    \end{equation}
    and $\xi_N(x)$ must converge to zero as $\mathcal{O}(1/N)$. Finally, the last term in Eq. \ref{eq_driftVarLimitCycle} can also be shown to converge to zero, since it is equivalent to $\displaystyle\lim_{N\to\infty}\frac{\delta}{N+1}\sum_{n=0}^N(-1)^{n}(\zeta_n(x)-\zeta_\infty(x))$, and
    \begin{equation}
        \left|\frac{\delta}{N+1}\sum_{n=0}^N(-1)^{n}(\zeta_n(x)-\zeta_\infty(x))\right| \leq \frac{\delta C\lambda}{(1-\lambda)^2(N+1)}.
    \end{equation}
    Thus, this term converges to zero at a rate faster than $\mathcal{O}(1/N)$. Therefore, the drift variance converges to $\delta^2/4$, and does so as $\sim \mathcal{O}(1/N)$.
\end{proof}

\section{Ergodic Systems}
\label{app_ergodic}

Ergodic systems are characterized by invariant densities where phase space averages coincide with time averages on typical trajectories. As such, we will focus on drift variances on these invariant densities. We claim that the following relation between the mean drift variance and the auto-correlation function of the corresponding observable holds:
\begin{widetext}
\begin{theorem}[Fluctuation-Dissipation Relation]
    Given a discrete map $f$ with an ergodic natural measure and a measurable observable $O$, the $O$-drift variance averaged over the natural distribution is related to the auto-correlation function, $C_n(O,O)$, as:
    \begin{align}
        &\langle\sigma^2_N\rangle = %\notag\\&
        \sum_{r = -N}^N \frac{(N-|r|)(N+1-|r|)(N+2-|r|)}{6(N+1)^2} C_r(O,O).
        \label{eq_flucDiss}
    \end{align}
\end{theorem}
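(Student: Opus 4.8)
The plan is to expand $\langle\sigma_N^2\rangle$ into its two constituent pieces, write each as a single sum over the lag $r$ weighted by $C_r(O,O)$, and then show that the two polynomial weights combine into the claimed cubic factor. Since the natural measure is ergodic and invariant under $f$, I would first use Birkhoff's theorem to replace $\bar{O}(x)$ by the phase-space average $\bar{O} = \int_M d\mu\, O$ for $\mu$-almost every $x$, and set $\tilde{O}_k = O(f^k(x)) - \bar{O}$ so that $\Delta_n = \sum_{k=0}^n \tilde{O}_k$. Invariance of $\mu$ then renders the two-point function stationary: $\int d\mu\, \tilde{O}_k \tilde{O}_l = C_{k-l}(O,O)$, depending only on the lag, with $C_r = C_{-r}$. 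Because $\langle\,\cdot\,\rangle$ is the average over the natural measure, it commutes with the finite sums defining $\sigma_N^2$.

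Next I would handle the second-moment term. Expanding the square and using stationarity gives $\langle \Delta_n^2\rangle = \sum_{k,l=0}^n C_{k-l}$, and counting the pairs at fixed lag yields $\langle \Delta_n^2\rangle = \sum_{r=-n}^n (n+1-|r|)\,C_r$. Averaging over $n$ and swapping the order of summation (for fixed $r$, the index $n$ runs from $|r|$ to $N$), the resulting arithmetic series gives
\begin{equation}
    \frac{1}{N+1}\sum_{n=0}^N \langle\Delta_n^2\rangle = \frac{1}{N+1}\sum_{r=-N}^N \frac{(N+1-|r|)(N+2-|r|)}{2}\,C_r .
\end{equation}

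For the mean term I would first resum $\sum_{n=0}^N \Delta_n = \sum_{k=0}^N (N+1-k)\,\tilde{O}_k$, so that its square averages to the weighted double sum $\sum_{k,l=0}^N (N+1-k)(N+1-l)\,C_{k-l}$. Grouping by the lag $r = k-l$ and evaluating the inner sum $\sum_{k-l=r}(N+1-k)(N+1-l)$ — which, after the substitution $j = N+1-l$, reduces to $\sum_j j\,(j-|r|)$ — produces the cubic
\begin{equation}
    \sum_{k-l=r}(N+1-k)(N+1-l) = \frac{(N+1-|r|)(N+2-|r|)(2N+3+|r|)}{6} .
\end{equation}
Dividing by $(N+1)^2$ gives the mean term. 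Finally, subtracting the two contributions, both of which carry the common factor $(N+1-|r|)(N+2-|r|)$, the residual bracket is $3(N+1) - (2N+3+|r|) = N-|r|$, which collapses the weight to $(N-|r|)(N+1-|r|)(N+2-|r|)/[6(N+1)^2]$ and establishes the relation.

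The main obstacle is the algebra of the mean term: the weights $(N+1-k)(N+1-l)$ do not depend on the lag alone, so I must verify that the diagonal sum over $k-l=r$ nonetheless depends only on $|r|$ and produces precisely the factor $(2N+3+|r|)$ needed to cancel against $3(N+1)$. This cancellation is delicate, and a single off-by-one or sign slip in the arithmetic-series evaluations would destroy the exact collapse to the factor $N-|r|$, so careful bookkeeping of the summation limits (and of the $r<0$ case via the $C_r=C_{-r}$ symmetry) is essential.
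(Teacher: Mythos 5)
Your proposal is correct, and every intermediate identity you derive matches the paper's. The overall strategy is the same: split $\langle\sigma_N^2\rangle$ into the time-averaged second moment $\frac{1}{N+1}\sum_n\langle\Delta_n^2\rangle$ and the squared time average, reduce each to a lag sum weighted by a polynomial in $N$ and $|r|$, and observe the cancellation $3(N+1)-(2N+3+|r|)=N-|r|$. The one place where your route genuinely differs is the mean term: the paper computes the pairwise drift correlator $\langle\Delta_{N_1}\Delta_{N_2}\rangle=\sum_{n_1=0}^{N_1}\sum_{n_2=0}^{N_2}C_{n_1-n_2}$, counts the number of $C_r$ terms via $\min(N_1+1-|r|,N_2+1)$, and then evaluates a double sum of these min functions over $N_1,N_2$, whereas you first collapse $\sum_{n=0}^N\Delta_n=\sum_{k=0}^N(N+1-k)\tilde{O}_k$ and then square, so the lag-$r$ weight becomes the explicit diagonal sum
\begin{equation}
    \sum_{k-l=r}(N+1-k)(N+1-l)=\frac{(N+1-|r|)(N+2-|r|)(2N+3+|r|)}{6},
\end{equation}
which is an elementary arithmetic-series computation rather than a min-function count. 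Both paths prove the same cubic identity, but yours avoids the four-index bookkeeping entirely and is arguably cleaner; it also makes transparent why the weight depends only on $|r|$. You are additionally more careful than the paper on the analytic preliminaries: invoking Birkhoff's theorem to replace $\bar{O}(x)$ by $\int_M d\mu\,O$ for $\mu$-almost every $x$, and invariance of $\mu$ to justify stationarity $\langle\tilde{O}_k\tilde{O}_l\rangle=C_{k-l}=C_{|k-l|}$, are steps the paper's proof uses implicitly without stating.
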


\begin{proof}
    We first note that:
    \begin{equation}
        \langle \Delta_{N_1}(x)\Delta_{N_2}(x)\rangle = \sum_{n_1=0}^{N_1}\sum_{n_2=0}^{N_2} C_{n_1-n_2}(O,O).
    \end{equation}

    If we denote $n_1-n_2 = r$, then the number of $C_r(O,O)$ terms in the above equation is $\text{min}(N_1+1-|r|,N_2+1)$ if $r \geq 0$, and $\text{min}(N_1+1,N_2-|r|)$ if $r < 0$. Using the fact that
    \begin{align}
        &\sum_{N_1 = r}^N\sum_{N_2=0}^N \text{min}(N_1+1-r,N_2+1) =%\notag\\& 
        \frac{(N+1-|r|)(N+2-|r|)(2N+3+|r|)}{6},
    \end{align}

    we can write:
    \begin{align}
        &\left\langle \left(\frac{1}{N+1}\sum_{n=0}^N \Delta_n(x)\right)^2\right\rangle = %\notag\\&
        \sum_{r=-N}^N\frac{(N+1-|r|)(N+2-|r|)(2N+3+|r|)}{6(N+1)^2} %\notag\\&\qquad\qquad 
        C_r(O,O).
    \end{align}

    A similar argument gives us:
    \begin{align}
        &\left\langle \frac{1}{N+1}\sum_{n=0}^N \Delta_n^2(x)\right\rangle = %\notag\\&
        \sum_{r=-N}^N\frac{(N+1-|r|)(N+2-|r|)}{2(N+1)} C_r(O,O).
    \end{align}

    The above two equations can be combined to give us the fluctuation-dissipation relation. A derivation of the continuous time version can be found in \cite{karve_2025}.
\end{proof}
\end{widetext}

\subsection{Purely Non-Mixing Systems}
\label{app_ergodicNotMixing}

With this fluctuation-dissipation relation, we can deduce the behavior of the mean drift variance given the form of the auto-correlation function. We first focus on non-weakly-mixing systems with pure point spectra. Correlations in such systems are quasi-periodic, and can take the form
\begin{equation}
    C_r(O,O) = \sum_{\omega} A_\omega \cos(\omega r),
\end{equation}
with a discrete set of frequencies $\omega$. Inserting this into Eq. \ref{eq_flucDiss}, we find that
\begin{equation}
    \langle \sigma^2_N\rangle = \sum_{\omega} A_\omega\left[\frac{1}{4\sin^2\frac{\omega}{2}} - \frac{1-\cos(N\omega)}{8N^2\sin^4\frac{\omega}{2}}\right] +\mathcal{O}\left(\frac{1}{N^3}\right).
\end{equation}
Thus, the drift variance approaches a finite value of $\sum_\omega \frac{A_\omega}{4\sin^2\frac{\omega}{2}}$ as $N\to\infty$. One can easily check that this is equivalent to the expression found for limit-cycles if $\omega$ corresponds to a single frequency. However, the convergence is faster here, since the second term in the above expression decays as $\mathcal{O}(1/N^2)$.

\subsection{Weakly Mixing Systems}
\label{app_weakMixing}

Weakly mixing systems are characterized by decaying averaged absolute correlations; that is,
\begin{equation}
    \lim_{r\to\infty} \frac{1}{r}\sum_{s=1}^r |C_s(O,O)| = 0.
\end{equation}
Thus, to study the behavior of the drift in such systems, we define the averaged correlation, $\bar{C}_r(O,O)$, as
\begin{equation}
    \bar{C}_r(O,O) = \frac{1}{r} \sum_{s=1}^r C_s(O,O),
\end{equation}
with $|r| \geq 1$. Clearly, this quantity must also decay to zero as $r\to\infty$ in weakly mixing systems, since
\begin{equation}
    |\bar{C}_r(O,O)| \leq \frac{1}{r}\sum_{s=1}^r |C_s(O,O)|.
\end{equation}
Assuming that the correlation is an even function, $C_{-r}(O,O) = C_r(O,O)$, it is easy to show that the fluctuation-dissipation relation can be expressed in terms of the averaged correlation as
\begin{align}
    \langle\sigma^2_N\rangle = & \ \frac{N(N+2)}{6(N+1)}C_0(O,O) \notag\\&+ \sum_{r=1}^N \frac{r(N-r)(N+1-r)}{(N+1)^2}\bar{C}_r(O,O).
    \label{eq_driftVarWeakMix}
\end{align}
The first term in the above expression always grows as $\mathcal{O}(N)$, but the behavior of the second term depends on the exact rate of decay of the correlation. Assuming that the averaged correlation is bounded by a polynomial decay rate, $|\bar{C}_r(O,O)| \leq \frac{K}{r^\alpha}$ for $r\geq 1$, we find that
\begin{align}
    &\left|\xi_N\right| \leq K\left\{\frac{N}{N+1}H_N^{(\alpha-1)} - \frac{2N+1}{(N+1)^2}H_N^{(\alpha-2)}\right.\notag\\&\qquad\qquad\qquad\qquad\qquad\qquad\qquad\left. + \frac{1}{(N+1)^2}H_N^{(\alpha-3)}\right\},
\end{align}
where $\displaystyle \xi_N = \sum_{r=1}^N \frac{r(N-r)(N+1-r)}{(N+1)^2}\bar{C}_r(O,O)$ is the second term in Eq. \ref{eq_driftVarWeakMix} and $\displaystyle H_{N}^{(\alpha)} = \sum_{r=1}^N \frac{1}{r^\alpha}$ is the generalized harmonic number. Using the fact that as $N\to\infty$,
\begin{equation}
    H_{N}^{(\alpha)} \sim \begin{cases}
        \text{constant}, & \alpha > 1\\
        \ln N, & \alpha = 1\\
        \displaystyle\frac{N^{1-\alpha}}{1-\alpha}, & \alpha < 1,
        \end{cases}
\end{equation}
we deduce the following. When $\alpha > 1$, $\xi_N$ grows sub-linearly with $N$, implying that the mean drift variance asymptotically approaches the linearly growing term $\frac{N(N+2)}{6(N+1)}C_0(O,O)$ -- a characteristic of strongly chaotic systems. On the other hand, if $\alpha < 1$, then $\xi_N$ can grow super-linearly, but not faster than $\mathcal{O}(N^{2-\alpha})$. In general, this implies the growth of the drift variance is super-linear and is bounded between $\mathcal{O}(N)$ and $\mathcal{O}(N^{2-\alpha})$, unless specific cancellations lead to sub-linear growth in very special cases. For example, specific cancellations leading to no growth do happen in purely non-mixing systems, where $\bar{C}_r(O,O)$ can show polynomial decay even if $\displaystyle\frac{1}{r}\sum_{s=1}^r |C_s(O,O)|$ does not decay. We classify systems that show anomalous growth of the mean drift variance as weakly chaotic.

A similar analysis can be performed for strongly mixing systems. Assuming that the correlation function itself is bounded by a polynomial decay, $|C_r(O,O)| \leq \frac{K}{r^\alpha}$, we again observe weak chaos if $\alpha \leq 1$, and strong chaos otherwise. Furthermore, strong chaos is also observed when the correlation function decays exponentially, $C_r(O,O) = K e^{-\lambda |r|}$, for example, in Anosov systems. The asymptotic behavior of the mean drift variance is
\begin{equation}
    \langle\sigma^2_N\rangle = \frac{K}{6} \coth\left(\frac{\lambda}{2}\right) N.
\end{equation}

\begin{figure*}
    \centering
    \begin{subfigure}[t]{0.45\linewidth}
        \centering
        \includegraphics[width=\linewidth]{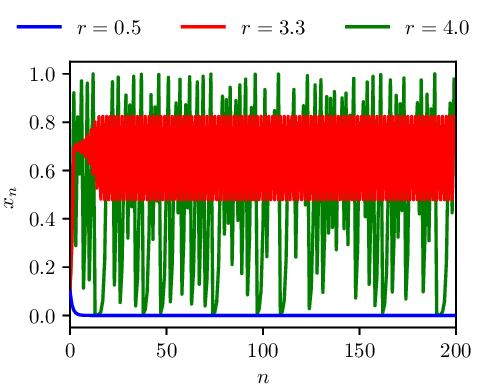}
        \caption{}
        \label{fig_logisticTraj}
    \end{subfigure}
    \begin{subfigure}[t]{0.45\linewidth}
        \centering
        \includegraphics[width=\linewidth]{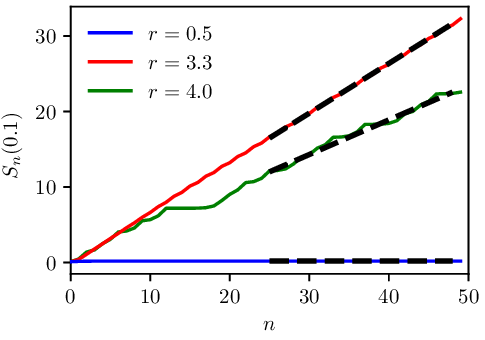}
        \caption{}
        \label{fig_logisticPartSum}
    \end{subfigure}
    \begin{subfigure}[t]{0.45\linewidth}
        \centering
        \includegraphics[width=\linewidth]{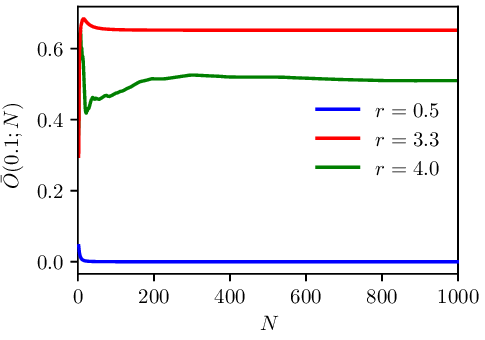}
        \caption{}
        \label{fig_logisticSlopes}
    \end{subfigure}
    \begin{subfigure}[t]{0.45\linewidth}
        \centering
        \includegraphics[width=\linewidth]{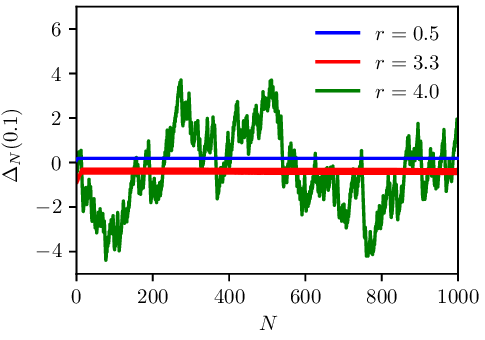}
        \caption{}
        \label{fig_logisticDrift}
    \end{subfigure}
    \begin{subfigure}[t]{0.45\linewidth}
        \centering
        \includegraphics[width=\linewidth]{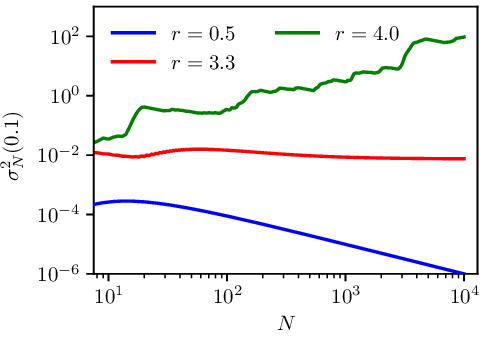}
        \caption{}
        \label{fig_logisticDriftVar}
    \end{subfigure}
    \begin{subfigure}[t]{0.45\linewidth}
        \centering
        \includegraphics[width=\linewidth]{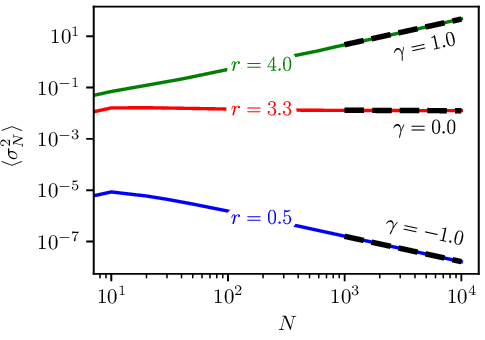}
        \caption{}
        \label{fig_logisticDriftVarAvg}
    \end{subfigure}
    \caption{(a) Three trajectories in the logistic map, all initialized at $x=0.1$, with $r = 0.5$, $3.3$, and $4$. The first and second trajectories converge to a fixed point and limit cycle attractor respectively, while the third one is chaotic. (b) The partial sum $S_n$ vs $n$ for these trajectories. The black dashed lines denote a linear fit. (c) The slopes obtained from this linear fit, $\bar{O}(x;N)$, as a function of $N$. They are observed to converge to their asymptotic values within the first few hundred iterations. (d) The corresponding $x^2$-drifts. (e) The corresponding $x^2$-drift variances. (f) The $x^2$-drift variances averaged over a uniform initial distribution. The dashed black lines correspond to a linear fit of $\ln\langle\sigma_N^2\rangle$ vs $N$, from which the diffusion exponent is extracted.}
\end{figure*}

\section{The 0-1 Test for Chaos}
\label{app_01Test}

The 0-1 test for chaos was formulated by Gottwald, \textit{et al.}\cite{gottwald_2004,gottwald_2005,gottwald_2009} as a binary tool for identifying chaos. We provide a brief review of the test in this section, and discuss how it relates to our framework.

Consider a discrete-time dynamical systems that produces a time-series $\phi(n)$, such that it is centered around zero. This time-series is used to drive an auxiliary two-dimensional system, via
\begin{subequations}
    \begin{equation}
        p(n+1) = p(n) + \phi(n)\cos cn,
    \end{equation}
    \begin{equation}
        q(n+1) = q(n) + \phi(n)\sin cn,
    \end{equation}
    \label{eq_01defn}
\end{subequations}
where $c \in (0,2\pi)$ is some constant. The authors propose that chaos in the original system can be probed by studying diffusion in this auxiliary system.

Diffusion in this auxiliary system is quantified through the mean-squared displacement (MSD), defined as
\begin{equation}
    M(n) = \lim_{N\to\infty} \frac{1}{N} \sum_{j=1}^N [p(j+n) - p(j)]^2 + [q(j+n) - q(j)]^2.
\end{equation}
By plugging in Eq. \ref{eq_01defn} into the above equation, it is easy to show that the MSD relates to the autocorrelation of $\phi$ as
\begin{equation}
    M(n) = \sum_{j=-(n-1)}^{(n-1)} (n - |j|) C_{\phi,\phi}(|j|) \cos cj, 
\end{equation}
where
\begin{equation}
    C_{\phi,\phi}(j) = \lim_{N \to \infty} \frac{1}{N} \sum_{n=1}^N \phi(n+j)\phi(n).
\end{equation}
Note the similarities with Eq. \ref{eq_flucDissText}.

One can now analyze how the MSD depends on the behavior of the correlation function $C_{\phi,\phi}$. If the correlation is quasi-periodic,
\begin{equation}
    C_{\phi,\phi}(j) = \sum_{\omega} A_\omega \cos\omega j,
\end{equation}
then,
\begin{align}
    M(n) = \sum_\omega A_\omega &\left[\frac{(1 - \cos c\cos\omega)(1 - \cos cn \cos \omega n)}{(\cos c - \cos\omega)^2}\right.  \notag\\&\qquad\qquad\qquad- \left.\frac{\sin c\sin \omega \sin cn \sin \omega n}{(\cos c - \cos\omega)^2}\right].
\end{align}
Thus, as long as the driving frequency $c$ does not resonate with the system, the MSD stays bounded indefinitely.

On the other hand, if the correlation decays exponentially fast, i.e. $C_{\phi,\phi}(j) = Ae^{-\lambda|j|}$, then the asymptotic behavior of the MSD is linear:
\begin{equation}
    M(n\to\infty) \sim An \frac{\sinh\lambda}{\cosh\lambda - \cos c}.
\end{equation}

A linearly growing MSD is again observed when the correlation exhibits a polynomial decay, $C_{\phi,\phi}(j) = A/|j|^\alpha$, for $j \neq 0$. The asymptotic behavior of the MSD is then given by
\begin{equation}
    M(n\to\infty) \sim C_{\phi,\phi}(0)n + An[\mathrm{Li}_\alpha(e^{ic}) + \mathrm{Li}_\alpha(e^{-ic})],
\end{equation}
where $\mathrm{Li}_n(z) = \sum_{k=1}^\infty z^k/k^n$ refers to the polylog function. Thus, a linearly growing MSD is indicative of a chaotic system. However, this test does not distinguish between weak and strong chaos.

\section{Numerical Details}
\label{app_numerics}

In this section, we provide details on the computation of the drift variance and the diffusion exponent. We will focus on three different trajectories in the logistic map, initialized at $x = 0.1$ with $r = 0.5$, $3.3$, and $4$. These trajectories are dissipative, regular, and chaotic, respectively, as shown in Fig. \ref{fig_logisticTraj}.

We consider the observable $O(x) = x^2$ for the logistic map. While the exact value of $\bar{O}(x) = \lim_{N\to\infty}\frac{1}{N+1} \sum_{n=0}^N O(f^n(x))$ cannot be computed numerically, there are various methods to estimate it. One method would be to directly compute the average of the observable over a long time. Another method, which we employ throughout this paper, is to first compute the partial sum
\begin{equation}
    S_n(x) = \sum_{m=0}^n O(f^m(x)),
\end{equation}
and extract $\bar{O}(x)$ as the slope of $S_n$ vs $n$, since
\begin{equation}
    S_n(x) = (n+1) \bar{O}(x) + \Delta_n(x).    
\end{equation}
See Fig. \ref{fig_logisticPartSum}, where we demonstrate this for the three trajectories under consideration. A simple linear regression \cite{draper_1998} yields
\begin{equation}
    \bar{O}(x) \approx \frac{12}{(N+1)(N+2)}\sum_{n=0}^N \left(\frac{n}{N} - \frac{1}{2}\right)S_n(x).
\end{equation}
Evidently, this estimate depends on the computation time $N$, and we make this dependence explicit by denoting the average as $\bar{O}(x; N)$. See Fig. \ref{fig_logisticSlopes}, which shows the dependence of the average on $N$. It is clear that for the logistic map, a good estimate of $\bar{O}$ can be obtained within the first few hundred iterations.

Next, the drift is computed using this estimation of the average, as
\begin{equation}
    \Delta_N(x) = \sum_{n=0}^N \left(O(f^n(x)) - \bar{O}(x;N)\right),
\end{equation}
and its variance is computed from Eq. \ref{eq_driftVar}. See Figs. \ref{fig_logisticDrift} and \ref{fig_logisticDriftVar}. The variance is then averaged over the initial distribution; we consider the uniform distribution in Fig. \ref{fig_logisticDriftVarAvg}. And finally, the diffusion exponent is computed by performing another linear regression on $\ln \langle \sigma_N^2\rangle$ vs $\ln N$.

\bibliographystyle{aipnum4-2}
\bibliography{references.bib}

\end{document}